\theoremstyle{plain}
\newtheorem{thm}{Theorem}[section]
\newtheorem{prop}[thm]{Proposition}
\theoremstyle{definition}
\theoremstyle{remark}
\numberwithin{equation}{section}
\newcommand{\subjclass}{\textbf{Math. Subj. Clas.: }\medskip}
\title{On the chirality of a discrete Dirac-K\"{a}hler equation  }
\author{ Volodymyr Sushch \\ Koszalin University of Technology, \\ Sniadeckich 2, 75-453 Koszalin, Poland \\ e-mail: volodymyr.sushch@tu.koszalin.pl  }
\begin{document}

\maketitle
\begin{abstract}
We discuss a discrete analogue of the Dirac-K\"{a}hler equation in which chiral properties of the continual counterpart are captured.
We pay special attention to a discrete Hodge star operator. To build one a combinatorial construction of double complex is used.
We describe discrete exterior calculus operations on a double comlex and obtain the discrete Dirac-K\"{a}hler equation using these tools.
 Self-dual and anti-self-dual discrete inhomogeneous forms  are presented.  The chiral invariance of the massless  discrete Dirac-K\"{a}hler  equation is shown. Moreover,
  in the massive case we prove that a discrete  Dirac-K\"{a}hler operator flips the chirality.
\end{abstract}

\noindent
{\bf Keywords:} Dirac-K\"{a}hler equation, chirality, Hodge star operator, difference equations, discrete models.

\subjclass  {39A12, 39A70, 35Q41}

\section{Introduction}
This work is a direct continuation of the paper \cite{S3}  in which we constructed a new discrete analogue of the  Dirac-K\"{a}hler  equation.
In \cite{S3} we proposed a geometric discretisation scheme based on the formalism  of  differential forms and showed that many of the algebraic relations amongst the Hodge star operator  $\ast$,  the exterior product $\wedge$, the differential $d$,  and the adjoint $\delta$ of $d$ that hold in the smooth setting  also hold in the discrete case. This approach was originated by Dezin \cite{Dezin}. There are alternative geometric discretisation schemes which base on the use of the differential form language    \cite{Beauce1, Becher, Catterall2, Dodziuk, Kanamori, Rabin, SSSA, S1}. Difficulties of the discretisation of the Hodge operators have been described by several authors  \cite{Becher,  Rabin, SSSA, Teixeira1,  Teixeira, Wilson}.

In this paper we are going to discuss a decomposition of the discrete  Dirac-K\"{a}hler  equation into its self-dual and anti-self-dual parts. The reason that we did not
consider this problem in \cite{S3} is that the Hodge star operator $\ast$ and its  discrete analogue are slightly  differ. In the continual case the operator $(\ast)^2$
is either an involution or antiinvolution while in the discrete model $(\ast)^2$ is equivalent to a shift with corresponding sign. This is one of the main distinctive features of the formalism \cite{S3} as compared to the continual case.
Now we define a discrete star operator using a combinatorial double complex construction. In this way we obtain the operator which  is more like its continual analogue since  $(\ast)^2=\pm I$, where $I$ is the identity operator. At the same time in the double complex we will use discrete analogues of differential forms, $d$, and $\wedge$ defined in \cite{S3}. The discrete star operator proposed here still preserves the Lorentz metric structure in our discrete model. It make possible to define a discrete analogue of  $\delta$ by using  an inner product of discrete forms (cochains) which imitate the continual case.

From the physics point of view self-dual and anti-self-dual fields of Dirac theory  correlate  with chiral fermions. It is well known that the chirality is an important feature of the Dirac theory. However, in lattice formulations of  fermions the chirality problem is one of the most notorious. This problem deals with breaking chiral symmetry on the lattice  \cite{Catterall,Kaplan, Luscher, Narayanan, Susskind}. See also more resent work \cite{Kaplan1}. For the Dirac-K\"{a}hler  equation on the lattice this difficulty was discussed first by Rabin \cite{Rabin}. In \cite{Beauce} a discretisation scheme in the Dirac-K\"{a}hler setting was proposed in which the chirality is captured on the lattice.
Chiral fermions can be described by using the fifth gamma matrix: $\gamma^5=\mathrm{i}\gamma^0\gamma^1\gamma^2\gamma^3$. The projection operators
 $P_L=\frac{I-\gamma^5}{2}$  and $P_R=\frac{I+\gamma^5}{2}$ decompose any Dirac field into its left-handed and right-handed parts.
 Rabin \cite{Rabin} pointed out that in the language of  differential forms
the chiral symmetry is a rotation of mixing forms with their duals  and the Hodge star operator plays a central role here. However, the operator $\ast$ is somewhat different from $\gamma^5$. Therefore,  we introduce a modified star operator which plays a role of $\gamma^5$ in our discrete model. Self-dual and anti-self-dual discrete inhomogeneous forms with respect to this new star operator are considered.  From the viewpoint presented here this allows us to deal with chirality. We show that, just as in the continuum case, a discrete massless  Dirac-K\"{a}hler  equation admits the chiral invariance. We prove also that in the massive case a discrete analog of the Dirac-K\"{a}hler operator flips the chirality.

We first briefly review some definitions and basic notations on the Dirac-K\"{a}hler equation \cite{Kahler, Rabin}.
Let $M={\mathbb R}^{1,3}$ be  Minkowski space with  metric signature  $(+,-,-,-)$.
Denote by $\Lambda^r(M)$ the vector space of smooth differential $r$-form, $r=0,1,2,3,4$. We consider  $\Lambda^r(M)$ over $\mathbb{C}$.
 Let $\omega$ and  $\varphi$ be  complex-valued  $r$-forms on $M$. The inner product is defined by
\begin{equation}\label{1}
(\omega, \ \varphi)=\int_{M}\omega\wedge\ast\overline{\varphi},
\end{equation}
where $\wedge$ is the exterior product and $\ast$ is the Hodge star operator  $\ast:\Lambda^r(M)\rightarrow\Lambda^{4-r}(M)$ (with respect to the Lorentz metric).
Let $d:\Lambda^r(M)\rightarrow\Lambda^{r+1}(M)$ be the exterior differential and let $\delta:\Lambda^r(M)\rightarrow\Lambda^{r-1}(M)$ be the formal adjoint of $d$  with respect to (\ref{1}) (codifferential). We have $\delta=\ast d\ast$.  Then the Laplacian (Laplace-Beltrami operator) acting on $r$-forms is defined by
\begin{equation}\label{2}
\Delta\equiv -d\delta-\delta d:\Lambda^r(M)\rightarrow\Lambda^{r}(M).
\end{equation}
It is clear that
\begin{equation*}
-d\delta-\delta d=(d-\delta)^2=-(d+\delta)^2.
\end{equation*}
Hence, a square root of the Laplacian can be written in two ways, namely  either  $d-\delta$ or $\mathrm{i}(d+\delta)$, where $\mathrm{i}$ is the usual complex unit ($\mathrm{i}^2=-1$).
In this paper we formulate the complex Dirac-K\"{a}hler equation by using the operator $\mathrm{i}(d+\delta)$.
Denote by $\Lambda(M)$ the set of all differential forms on $M$. We have
$\Lambda(M)=\Lambda^0(M)\oplus\Lambda^1(M)\oplus\Lambda^2(M)\oplus\Lambda^3(M)\oplus\Lambda^4(M)$. Let $\Omega\in\Lambda(M)$
be an inhomogeneous differential form.   This form can be expanded  as
\begin{equation*}
\Omega=\sum_{r=0}^4\overset{r}{\omega},
\end{equation*}
where $\overset{r}{\omega}\in\Lambda^r(M)$.
 The Dirac-K\"{a}hler equation is given by
\begin{equation}\label{3}
\mathrm{i}(d+\delta)\Omega=m\Omega,
\end{equation}
where $m$  is a mass parameter.
It is easy to show that Eq. (\ref{3}) is equivalent to the set of equations
\begin{align*}\label{}
\mathrm{i}\delta\overset{1}{\omega}=m\overset{0}{\omega},\\
\mathrm{i}(d\overset{0}{\omega}+\delta\overset{2}{\omega})=m\overset{1}{\omega},\\
\mathrm{i}(d\overset{1}{\omega}+\delta\overset{3}{\omega})=m\overset{2}{\omega},\\
\mathrm{i}(d\overset{2}{\omega}+\delta\overset{4}{\omega})=m\overset{3}{\omega},\\
\mathrm{i}d\overset{3}{\omega}=m\overset{4}{\omega}.
\end{align*}

\section{Double complex construction}
In this section we introduce a double complex construction which bases on
the combinatorial model of Minkowski space described in \cite{S3}.
For the convenience of the reader we briefly repeat the relevant material from \cite{S3}
without proofs, thus making our presentation self-contained.

Let the tensor product  $C(4)=C\otimes C\otimes C\otimes C$
of a   1-dimensional complex be a combinatorial model of Euclidean space
 ${\mathbb R}^4$. The 1-dimensional complex $C$ is defined in the following way.
Let $C^0$ denotes the real linear space of 0-dimensional chains generated by
basis elements $x_\kappa$ (points), $\kappa\in {\mathbb Z}$. It is convenient to
introduce the shift operator  $\tau$ in the set of indices by
\begin{equation}\label{4}
\tau\kappa=\kappa+1.
\end{equation}
We denote the open interval $(x_\kappa, x_{\tau\kappa})$ by $e_\kappa$.
One can regard the set $\{e_{\kappa}\}$ as a set of basis elements of
the real linear space $C^1$. Suppose that $C^1$ is the space of 1-dimensional
chains.
Then the 1-dimensional complex (combinatorial real line) is the direct sum
of the introduced spaces $C=C^0\oplus C^1$. The boundary operator $\partial$
in $C$ is given by
$$
\partial x_\kappa=0, \qquad  \partial e_\kappa=x_{\tau\kappa}-x_\kappa.
$$
The definition is extended to arbitrary chains by linearity.

Multiplying the basis elements $x_\kappa, e_\kappa$ in various way we obtain
basis elements of $C(4)$. Let $s_k$ be  an arbitrary basis element of $C(4)$. Then we have
${s_k=s_{k_0}\otimes s_{k_1}\otimes s_{k_2}\otimes s_{k_3}}$, where $s_{k_i}$
is either  $x_{k_i}$ or  $e_{k_i}$ and $k_i\in {\mathbb Z}$. Here $k=(k_0, k_1, k_2, k_3)$ is a multi-index.
Let
\begin{equation*}
x_k=x_{k_0}\otimes x_{k_1}\otimes x_{k_2}\otimes x_{k_3}  \quad \mbox{and} \quad
e_k=e_{k_0}\otimes e_{k_1}\otimes e_{k_2}\otimes e_{k_3}
\end{equation*}
be the 0- and 4-dimensional basis elements
of $C(4)$.
The 1-dimensional basis elements
of $C(4)$ can be written as
\begin{align}\label{5}
e_k^0=e_{k_0}\otimes x_{k_1}\otimes x_{k_2}\otimes x_{k_3}, \nonumber \qquad
e_k^1=x_{k_0}\otimes e_{k_1}\otimes x_{k_2}\otimes x_{k_3}, \\
e_k^2=x_{k_0}\otimes x_{k_1}\otimes e_{k_2}\otimes x_{k_3},  \qquad
e_k^3=x_{k_0}\otimes x_{k_1}\otimes x_{k_2}\otimes e_{k_3},
\end{align}
where  the superscript $i$ indicates  a place of $e_{k_i}$ in $e_k^i$ and  $i=0,1,2,3$.
In the same way we will write  the 2-dimensional basis elements
of $C(4)$ as
\begin{align}\label{6}
e_k^{01}=e_{k_0}\otimes e_{k_1}\otimes x_{k_2}\otimes x_{k_3},  \qquad
e_k^{12}=x_{k_0}\otimes e_{k_1}\otimes e_{k_2}\otimes x_{k_3},\nonumber \\
e_k^{02}=e_{k_0}\otimes x_{k_1}\otimes e_{k_2}\otimes x_{k_3}, \qquad
e_k^{13}=x_{k_0}\otimes e_{k_1}\otimes x_{k_2}\otimes e_{k_3}, \nonumber \\
e_k^{03}=e_{k_0}\otimes x_{k_1}\otimes x_{k_2}\otimes e_{k_3}, \qquad
e_k^{23}=x_{k_0}\otimes x_{k_1}\otimes e_{k_2}\otimes e_{k_3}.
\end{align}
Finally, denote by $e_k^{012}, e_k^{013}, e_k^{023}, e_k^{123}$ the 3-dimensional basis elements
of $C(4)$.

Let $C(4)=C(p)\otimes C(q)$, where $p+q=4$. If $a\in C(p)$ and $b\in C(q)$ are arbitrary chains, belonging to the complexes being multiplied, then  we extend the definition of   $\partial$ to chains of $C(4)$ by the rule
\begin{equation}\label{7}
\partial(a\otimes b)=\partial a\otimes b+(-1)^ra\otimes\partial b,
\end{equation}
where $r$ is the dimension of the chain $a$. It is easy to check that  $\partial\partial a=0$.

Let us introduce the construction of a double complex. This construction generalizes that of \cite{S2}. Together with
the complex $C(4)$ we consider its double, namely the complex
$\tilde{C}(4)$ of exactly the same structure. Define the one-to-one
correspondence
\begin{equation*}
\ast^c : C(4)\rightarrow\tilde{C}(4), \qquad \ast^c : \tilde
C(4)\rightarrow C(4)
\end{equation*}
in the following way. Let  $s_k^{(r)}$ be an arbitrary
$r$-dimensional basis element of $C(4)$, i.e.,  the product
$s_k^{(r)}=s_{k_0}\otimes s_{k_1}\otimes s_{k_2}\otimes s_{k_3}$
contains exactly $r$ of $1$-dimensional elements $e_{k_i}$ and $4-r$
of $0$-dimensional elements  $x_{k_i}$,   and  the superscript $(r)$ indicates  a position  of $e_{k_i}$ in $s_k^{(r)}$ (see, for example,  (\ref{5}) and (\ref{6})).  Then
\begin{equation}\label{8}
\ast^cs_k^{(r)}=\varepsilon(r)\tilde s_k^{(4-r)}, \qquad \ast^c\tilde s_k^{(r)}=\varepsilon(r) s_k^{(4-r)},
\end{equation}
where
\begin{equation*}
 \tilde s_k^{(4-r)}=\ast^c s_{k_0}\otimes \ast^c s_{k_1}\otimes
\ast^c s_{k_2}\otimes \ast^c s_{k_3}
\end{equation*}
and $\ast^c s_{k_i}=\tilde e_{k_i}$ if $s_{k_i}=x_{k_i}$ and
$\ast^c s_{k_i}=\tilde x_{k_i}$ if $s_{k_i}=e_{k_i}.$
Here $\varepsilon(r)$ is the Levi-Civita symbol, i.e.,
\begin{equation*}
\varepsilon(r)=\left\{\begin{array}{l}+1 \quad  \mbox{if} \quad ((r),
(4-r))\quad \mbox{is an even permutation of} \quad (0,1,2,3) \\
                            -1 \quad  \mbox{if} \quad ((r),
(4-r))\quad \mbox{is an odd permutation of} \quad (0,1,2,3).
                            \end{array}\right.
\end{equation*}
 For example, for the 1- and 2-dimensional basis elements  we have
\begin{equation*}\label{}
\ast^c e_k^0=\tilde e_k^{123},  \qquad \ast^c e_k^1=-\tilde e_k^{023}, \qquad
\ast^c e_k^2=\tilde e_k^{013}, \qquad \ast^c e_k^3=-\tilde e_k^{012}
\end{equation*}
and
\begin{align*}
\ast^c e_k^{01}=\tilde e_k^{23},  \qquad \ast^c e_k^{02}=-\tilde e_k^{13}, \qquad  \ast^c e_k^{03}=\tilde e_k^{12}, \\
\ast^c e_k^{12}=\tilde e_k^{03}, \qquad \ast^c e_k^{13}=-\tilde e_k^{02}, \qquad \ast^c e_k^{23}=\tilde e_k^{01}.
\end{align*}
We will also use the symbol $\varepsilon'(r)$ to denote the Levi-Civita symbol given by
\begin{equation*}
\varepsilon'(r)=\left\{\begin{array}{l}+1 \quad  \mbox{if} \quad ((4-r), (r))\quad \mbox{is an even permutation of} \quad (0,1,2,3) \\
                            -1 \quad  \mbox{if} \quad ((4-r), (r))\quad \mbox{is an odd permutation of} \quad (0,1,2,3).
                            \end{array}\right.
\end{equation*}
So we can write
\begin{equation*}
\ast^c s_k^{(4-r)}=\varepsilon'(r)\tilde s_k^{(r)} \qquad \mbox{and} \qquad \ast^c\tilde s_k^{(4-r)}=\varepsilon'(r)s_k^{(r)}.
\end{equation*}
It is easy to check that
\begin{equation}\label{9}
\varepsilon(r)\varepsilon'(r)=(-1)^r.
\end{equation}

\begin{prop}Let $a_r\in C(4)$ be an $r$-dimensional chain, i.e.,
\begin{equation}\label{10}
a_r=\sum_k\sum_ra_{(r)}^ks_k^{(r)}, \qquad  a_{(r)}^k\in\mathbb R.
\end{equation}
Then we have
\begin{equation}\label{11}
\ast^c\ast^c
 a_r=(-1)^{r}a_r.
\end{equation}
\end{prop}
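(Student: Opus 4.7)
The plan is to reduce the claim to basis elements by linearity and then unwind the definition (\ref{8}) twice, collecting the resulting signs via the identity (\ref{9}).

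First, since $\ast^c$ is defined as a linear map and every $r$-dimensional chain $a_r$ admits the expansion (\ref{10}), it suffices to verify $\ast^c\ast^c s_k^{(r)}=(-1)^r s_k^{(r)}$ for an arbitrary $r$-dimensional basis element $s_k^{(r)}$ of $C(4)$. Once this is done, applying $\ast^c\ast^c$ termwise to (\ref{10}) yields the asserted identity.

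Next I would carry out the two-step unwinding. By the first half of (\ref{8}),
\begin{equation*}
\ast^c s_k^{(r)}=\varepsilon(r)\,\tilde{s}_k^{(4-r)}.
\end{equation*}
Now $\tilde{s}_k^{(4-r)}$ is a $(4-r)$-dimensional basis element of $\tilde{C}(4)$, so applying $\ast^c$ a second time and using the companion formula $\ast^c\tilde{s}_k^{(4-r)}=\varepsilon'(r)s_k^{(r)}$ stated just before (\ref{9}), I obtain
\begin{equation*}
\ast^c\ast^c s_k^{(r)}=\varepsilon(r)\,\ast^c\tilde{s}_k^{(4-r)}=\varepsilon(r)\varepsilon'(r)\,s_k^{(r)}.
\end{equation*}
Invoking (\ref{9}), the sign collapses to $(-1)^r$, which is exactly what is needed.

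The only conceptual point that needs care is the relation (\ref{9}) between $\varepsilon(r)$ and $\varepsilon'(r)$; this is precisely the bookkeeping of reversing the ordered pair $((r),(4-r))$ of ``positions'' occupied by the $1$-dimensional factors before and after the application of $\ast^c$. Since (\ref{9}) is already available in the text, no additional combinatorial work is required, and the proof is essentially a two-line computation on basis elements followed by linear extension. I expect no serious obstacle; the main subtlety is simply to distinguish between $\varepsilon(r)$ (for the pair $((r),(4-r))$) and $\varepsilon'(r)$ (for the reversed pair) when applying $\ast^c$ the second time, and not to confuse the two.
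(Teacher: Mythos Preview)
Your proposal is correct and follows exactly the approach indicated in the paper: reduce to basis elements by linearity, apply $\ast^c$ twice via (\ref{8}) and its companion formula, and collapse the product $\varepsilon(r)\varepsilon'(r)$ using (\ref{9}). There is nothing to add.
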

 \begin{proof} The proof consists in applying the operation $\ast^c$ for basis elements and using (\ref{9}).
\end{proof}
Suppose that the combinatorial model of Minkowski space has the same structure
as $C(4)$.  We will use the index $k_0$ to denote the basis elements of $C$ which correspond to the time coordinate of
$M$. Hence, the indicated basis elements will be written as  $x_{k_0}$, $e_{k_0}$.

Let us now consider a dual complex to $C(4)$. We define its as the complex of cochains
$K(4)$ with complex coefficients. The complex $K(4)$
has a similar structure, namely ${K(4)=K\otimes K\otimes K\otimes K}$, where $K$ is a dual
complex to the 1-dimensional complex $C$. We will write the basis elements of $K$  as
$x^\kappa$ and $e^\kappa$, $\kappa\in {\mathbb Z}$. Then an arbitrary $r$-dimensional basis element of $K(4)$ can be written  as
$s_{(r)}^k=s^{k_0}\otimes s^{k_1}\otimes s^{k_2}\otimes s^{k_3}$, where $s^{k_i}$
is either  $x^{k_i}$ or  $e^{k_i}$. Note that  the superscript and the subscript change places,   as compared $s_{(r)}^k$ with the corresponding  basis element  $s^{(r)}_k$ of $C(4)$.  We will call cochains forms,
emphasizing their relationship with the corresponding continual objects, differential forms.

Denote by $K^r(4)$ the set of forms of degree $r$. We can represent  $K(4)$ as
\begin{equation*}
K(4)=K^0(4)\oplus K^1(4)\oplus K^2(4)\oplus K^3(4)\oplus K^4(4).
\end{equation*}
Let $\overset{r}{\omega}\in K^r(4)$. Then we have
\begin{equation}\label{12}
\overset{0}{\omega}=\sum_k\overset{0}{\omega}_kx^k, \qquad  \overset{4}{\omega}=\sum_k\overset{4}{\omega}_ke^k,
\end{equation}
\begin{equation}\label{13}
\overset{1}{\omega}=\sum_k\sum_{i=0}^3\omega_k^ie_i^k, \qquad
\overset{2}{\omega}=\sum_k\sum_{i<j} \omega_k^{ij}e_{ij}^k, \qquad
\overset{3}{\omega}=\sum_k\sum_{i<j<l} \omega_k^{ijl}e_{ijl}^k,
\end{equation}
where the components $\overset{0}{\omega}_k, \ \overset{4}{\omega}_k, \  \omega_k^i, \ \omega_k^{ij}$ and $\omega_k^{ijl}$ are complex numbers.

As in \cite{Dezin} and \cite{S3}, we define the pairing (chain-cochain) operation for any basis elements
$\varepsilon_k\in C(4)$,  $s^k\in K(4)$ by the rule
\begin{equation}\label{14}
\langle\varepsilon_k, \ s^k\rangle=\left\{\begin{array}{l}0, \quad \varepsilon_k\ne s_k\\
                            1, \quad \varepsilon_k=s_k.
                            \end{array}\right.
\end{equation}
The operation (\ref{14}) is linearly extended to arbitrary chains and cochains.

The coboundary operator $d^c: K^r(4)\rightarrow K^{r+1}(4)$ is defined by
\begin{equation}\label{15}
\langle\partial a, \ \overset{r}{\omega}\rangle=\langle a, \ d^c\overset{r}{\omega}\rangle,
\end{equation}
where $a\in C(4)$ is an $r+1$ dimensional chain. The operator $d^c$ is an analog of the exterior differential.
From the above it follows that
\begin{equation*}\label{}
 d^c\overset{4}{\omega}=0 \quad \mbox{and} \quad d^cd^c\overset{r}{\omega}=0 \quad \mbox{for any} \quad r.
\end{equation*}
Let the difference operator $\Delta_i$ is  given  by
\begin{equation*}\label{}
\Delta_i\omega_k=\omega_{\tau_ik}-\omega_k
\end{equation*}
for any components $\omega_k\in\mathbb{C}$ of $\overset{r}{\omega}\in K^r(4)$. For simplicity of notation we write here $\omega_k$ instead of $\omega_k^{(r)}$ and $\tau_i$ is
  the shift operator  which acts  as
\begin{equation*}\label{}
\tau_ik=(k_0,...\tau
 k_i,...k_3), \quad
  i=0,1,2,3,
  \end{equation*}
  where $\tau$ is defined by (\ref{4}).
Using (\ref{7}) and (\ref{15}) we can calculate
\begin{equation}\label{16}
d^c\overset{0}{\omega}=\sum_k\sum_{i=0}^3(\Delta_i\overset{0}{\omega}_k)e_i^k,
\end{equation}
\begin{equation}\label{17}
d^c\overset{1}{\omega}=\sum_k\sum_{i<j}(\Delta_i\omega_k^j-\Delta_j\omega_k^i)e_{ij}^k,
\end{equation}
\begin{align}\label{18}
d^c\overset{2}{\omega}=\sum_k\big[(\Delta_0\omega_k^{12}-\Delta_1\omega_k^{02}+\Delta_2\omega_k^{01})e_{012}^k\nonumber \\
+(\Delta_0\omega_k^{13}-\Delta_1\omega_k^{03}+\Delta_3\omega_k^{01})e_{013}^k \nonumber \\
+(\Delta_0\omega_k^{23}-\Delta_2\omega_k^{03}+\Delta_3\omega_k^{02})e_{023}^k \nonumber \\
+(\Delta_1\omega_k^{23}-\Delta_2\omega_k^{13}+\Delta_3\omega_k^{12})e_{123}^k\big],
\end{align}
\begin{equation}\label{19}
d^c\overset{3}{\omega}=\sum_k(\Delta_0\omega_k^{123}-\Delta_1\omega_k^{023}+\Delta_2\omega_k^{013}-\Delta_3\omega_k^{012})e^k.
\end{equation}

We now consider   a multiplication of discrete forms which is an analog of the
exterior multiplication for differential forms.  Denote by  $\cup$ this multiplication.
For the basis elements of the one-dimensional complex $K(1)=K$ the $\cup$-multiplication is defined as follows
\begin{equation*}\label{}
x^\kappa\cup x^\kappa=x^\kappa, \quad e^\kappa\cup x^{\tau\kappa}=e^\kappa,
\quad x^\kappa\cup e^\kappa=e^\kappa, \quad \kappa\in{\mathbb Z},
\end{equation*}
supposing the product to be zero in all other case.
To arbitrary basis element of $K(p)$   this definition is  extended by induction on $p$, where ${p=2,3,4}$.
See \cite{Dezin, S3} for details.
 The $\cup$-multiplication can be spread  linearly to  forms.
This definition leads to the following discrete version of the Leibniz rule for differential forms.
\begin{prop}
Let $\varphi$ and $\psi$ be arbitrary forms of $K(4)$.
Then
\begin{equation*}\label{}
 d^c(\varphi\cup\psi)=d^c\varphi\cup\psi+(-1)^r\varphi\cup
d^c\psi,
\end{equation*}
where  $r$ is the degree  of a form $\varphi$.
\end{prop}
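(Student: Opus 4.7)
The plan is to reduce the Leibniz rule to the one-dimensional complex $K$ and then promote it to $K(4)$ by induction on the number of tensor factors, mirroring the inductive definition of $\cup$ referenced from \cite{Dezin, S3}. By linearity of $d^c$ and bilinearity of $\cup$, it suffices to verify the identity on pairs of basis elements.

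For the base case on $K$, I would first extract explicit formulas for $d^c$ on basis elements. From (\ref{14}) together with $\partial e_\kappa = x_{\tau\kappa}-x_\kappa$, the definition (\ref{15}) yields $d^c x^\kappa = e^{\kappa-1}-e^\kappa$ and $d^c e^\kappa = 0$. The only nontrivial products are
\begin{equation*}
x^\kappa\cup x^\kappa = x^\kappa, \qquad e^\kappa\cup x^{\tau\kappa} = e^\kappa, \qquad x^\kappa\cup e^\kappa = e^\kappa,
\end{equation*}
so a short case analysis verifies $d^c(\varphi\cup\psi)=d^c\varphi\cup\psi+(-1)^{\deg\varphi}\varphi\cup d^c\psi$ on all basis pairs in $K$; the pairs with vanishing $\cup$-product reduce to checking that each term on the right vanishes separately (usually because it again involves $e\cup e$ or mismatched indices).

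For the inductive step, I would write $K(4)=K(p)\otimes K(q)$ with $p+q=4$ and use two facts. Relation (\ref{7}) on chains dualizes through (\ref{15}) to the tensor rule
\begin{equation*}
d^c(\alpha\otimes\beta) = d^c\alpha\otimes\beta + (-1)^{\deg\alpha}\,\alpha\otimes d^c\beta, \qquad \alpha\in K(p),\ \beta\in K(q).
\end{equation*}
The $\cup$-product, defined by induction on $p$ in \cite{Dezin, S3}, obeys a graded tensor formula of the schematic form
\begin{equation*}
(\alpha_1\otimes\alpha_2)\cup(\beta_1\otimes\beta_2) = \pm\,(\alpha_1\cup\beta_1)\otimes(\alpha_2\cup\beta_2),
\end{equation*}
with a shuffle sign depending on $\deg\alpha_2$ and $\deg\beta_1$. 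Substituting $\varphi=\alpha_1\otimes\alpha_2$ and $\psi=\beta_1\otimes\beta_2$ into both sides of the claimed Leibniz rule, expanding each by the two rules above, and invoking the induction hypothesis within each tensor slot, the two sides collapse to the same expression. Four applications of this step starting from the base case handle all of $K(4)$.

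The main obstacle is sign bookkeeping. The exponent $(-1)^r$ on the right of the Leibniz rule must match precisely the algebraic sum of the signs contributed by the dualized tensor rule for $d^c$ and by the shuffle sign in the tensor rule for $\cup$. Once the arithmetic of the exponents is carried out carefully — in particular noting that $\deg(\alpha_1\otimes\alpha_2)=\deg\alpha_1+\deg\alpha_2$, so that the shuffle sign and the $d^c$-tensor sign conspire to produce $(-1)^{\deg\varphi}$ — the identity follows mechanically from the induction hypothesis, with no new ideas beyond the one-dimensional case.
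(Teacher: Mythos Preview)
The paper does not actually prove this proposition; it is stated immediately after the definition of $\cup$, with the construction and its properties deferred to \cite{Dezin, S3}, and no argument is supplied here. So there is nothing in the paper itself to compare your attempt against. Your inductive strategy---verify the rule on $K$ and then propagate it through the tensor factors using the dualized boundary rule coming from~(\ref{7}) and~(\ref{15}) together with the Koszul sign in the tensor formula for $\cup$---is precisely the standard route taken in those references and is sound in outline.

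One small correction to your base case: it is not true that for every basis pair with $\varphi\cup\psi=0$ the two terms on the right vanish separately. Take $\varphi=x^\kappa$ and $\psi=x^{\tau\kappa}$. Then
\[
d^c\varphi\cup\psi=(e^{\kappa-1}-e^\kappa)\cup x^{\tau\kappa}=-e^\kappa,
\qquad
\varphi\cup d^c\psi=x^\kappa\cup(e^\kappa-e^{\tau\kappa})=e^\kappa,
\]
so the two contributions cancel rather than vanish individually. With that one case handled, the one-dimensional check is complete, and the sign arithmetic in the inductive step (matching $(-1)^{\deg\alpha_1+\deg\alpha_2}$ against the combination of the $d^c$-tensor sign and the $\cup$-shuffle sign) is routine once written out in full.
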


Let  $\tilde K(4)$ be a complex of the cochains over the double complex
$\tilde C(4)$, with the coboundary operator $d^c$ defined in it by
(\ref{15}). Hence, $\tilde K(4)$ has the same structure as  $K(4)$.
The definitions of $d^c$  and $\cup$  do not depend on a metric.
 At the same time, to define a discrete analog of the Hodge star operator
$\ast$ we must take into account the  Lorentz metric structure  on $K(4)$.
This means that Definition (\ref{8}) is not suitable for a discrete version of the Hodge star operator.
Define the operation $\ast: K^r(4)\rightarrow \tilde K^{4-r}(4)$ for an arbitrary basis element $s^k_{(r)}=s^{k_0}\otimes  s^{k_1}\otimes s^{k_2}\otimes s^{k_3}$ by the rule
\begin{equation}\label{20}
\ast s^k_{(r)}=Q(k_0)\varepsilon(r)\tilde s^k_{(4-r)},
\end{equation}
where
\begin{equation*}
Q(k_0)=\left\{\begin{array}{l}+1 \quad  \mbox{if} \quad s^{k_0}=x^{k_0} \\
                            -1 \quad  \mbox{if} \quad s^{k_0}= e^{k_0}.
                            \end{array}\right.
\end{equation*}
 This definition makes sense because the formula (\ref{20})  preserves the  Lorentz  signature of metric in our  discrete model. From (\ref{20}) we obtain
\begin{equation}\label{21}
\ast x^k=\ast(x^{k_0}\otimes x^{k_1}\otimes x^{k_2}\otimes x^{k_3})=\tilde e^k,
\end{equation}
\begin{equation}\label{22}
\ast e^k=\ast(e^{k_0}\otimes e^{k_1}\otimes e^{k_2}\otimes e^{k_3})=-\tilde x^k,
\end{equation}
\begin{equation}\label{23}
\ast e_0^k=-\tilde e_{123}^k, \qquad \ast e_1^k=-\tilde e_{023}^k, \qquad
\ast e_2^k=\tilde e_{013}^k, \qquad \ast e_3^k=-\tilde e_{012}^k,
\end{equation}
\begin{align}\label{24}
\ast e_{01}^k&=-\tilde e_{23}^ k, \qquad \ast e_{02}^k=\tilde e_{13}^ k, \qquad \ast e_{03}^k=-\tilde e_{12}^ k, \nonumber \\
\ast e_{12}^k&=\tilde e_{03}^ k, \qquad  \ast e_{13}^k=-\tilde e_{02}^ k, \qquad \ast e_{23}^k=\tilde e_{01}^ k,
\end{align}
\begin{equation}\label{25}
\ast e_{012}^k=-\tilde e_3^ k, \quad \ast e_{013}^k=\tilde e_2^ k, \quad
\ast e_{023}^k=-\tilde e_1^ k, \quad \ast e_{123}^k=-\tilde e_0^k.
\end{equation}
For any $r$-form we extend (\ref{20}) by linearity.

It is easy to check that
 \begin{equation*}
 \ast\ast s^k_{(r)}=(-1)^{r(4-r)+1}s^k_{(r)}=(-1)^{r+1}s^k_{(r)}.
 \end{equation*}
   Consequently,  for each $r$-form $\overset{r}{\omega}\in K^r(4)$ we  have
 \begin{equation}\label{26}
\ast\ast\overset{r}{\omega}=(-1)^{r(4-r)+1}\overset{r}{\omega}=
(-1)^{r+1}\overset{r}{\omega}.
 \end{equation}
 It means that the discrete $\ast$ operation imitates correctly the continual case.
\begin{prop}
Let $a_r\in C(4)$ be an $r$-dimensional chain (\ref{10}).
Then we have
\begin{equation}\label{27}
\langle \tilde a_r, \ \ast\omega\rangle=(-1)^rQ(k_0)\langle \ast^c \tilde a_r, \ \omega\rangle,
\end{equation}
where $\omega\in K^{4-r}(4)$.
\end{prop}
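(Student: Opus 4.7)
The plan is to reduce to basis elements and then unpack the definitions. By linearity in both arguments, it suffices to verify (\ref{27}) for a basis chain $a_r = s_k^{(r)}$ and a basis cochain $\omega = s^m_{(4-r)}$; the general case follows from (\ref{10}) and (\ref{13}).

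For the left-hand side, I would apply (\ref{20}) with $r$ replaced by $4-r$ to obtain $\ast s^m_{(4-r)} = Q(m_0)\varepsilon(4-r)\tilde s^m_{(r)}$, where the superscript $(r)$ on $\tilde s^m_{(r)}$ is the complement of the superscript $(4-r)$ on $\omega$. The pairing rule (\ref{14}) then forces
\[
\langle \tilde a_r,\ \ast\omega\rangle = Q(m_0)\,\varepsilon(4-r)
\]
when both the multi-indices and the superscript tuples match, and $0$ otherwise. For the right-hand side, (\ref{8}) yields $\ast^c\tilde s_k^{(r)} = \varepsilon(r)\, s_k^{(4-r)}$ with superscript the complement of $(r)$, and (\ref{14}) gives a non-zero pairing under the same matching condition:
\[
(-1)^r Q(k_0)\,\langle\ast^c\tilde a_r,\ \omega\rangle = (-1)^r\, Q(k_0)\,\varepsilon(r).
\]
Both sides therefore vanish simultaneously, and the task reduces to comparing the non-vanishing values.

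The $Q$-factors coincide: $Q(m_0)$ is determined by the type of the $0$-th factor of $\omega$, while $Q(k_0)$ reflects the $0$-th factor of $\ast^c\tilde a_r$. Since $\ast^c$ swaps $x$-type and $e$-type factors, the $0$-th factor of $\ast^c\tilde a_r$ has the opposite type to that of $a_r$, which by the matching condition agrees with the $0$-th factor of $\omega$. What remains is the sign identity
\[
\varepsilon(4-r) = (-1)^r\,\varepsilon(r).
\]
By definition, the $\varepsilon(4-r)$ produced by applying (\ref{20}) to $s^m_{(4-r)}$ is the parity of the permutation $((4-r),(r))\to(0,1,2,3)$, which is exactly $\varepsilon'(r)$ as introduced in the text. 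Equation (\ref{9}), $\varepsilon(r)\varepsilon'(r) = (-1)^r$, then rearranges immediately to the desired identity.

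The main obstacle I expect is bookkeeping rather than a substantive difficulty: one must keep careful track of which basis element each $Q$ is read off from, and of the two different orderings $((r),(4-r))$ versus $((4-r),(r))$ of the index block used to define $\varepsilon(r)$ and $\varepsilon'(r)$. Once this is disentangled, the proposition collapses to a one-line appeal to (\ref{9}).
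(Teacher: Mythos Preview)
Your argument is correct and follows essentially the same route as the paper: reduce to basis elements by linearity, expand both pairings via the definitions (\ref{8}) and (\ref{20}), identify the sign $\varepsilon(4-r)$ arising from $\ast\omega$ with $\varepsilon'(r)$, and conclude by (\ref{9}). The paper's write-up is marginally slicker in that it fixes $\omega=s^k_{(4-r)}$ with the complementary superscript from the outset, so the $Q$-factor appearing is automatically the one attached to $\omega$ and no separate ``$Q(m_0)=Q(k_0)$'' discussion is needed; your extra step of matching $Q(m_0)$ against $Q(k_0)$ via the type-swap under $\ast^c$ is correct but could be shortened by simply noting that in the nonvanishing case the $0$-th factor of $\omega$ is by construction the complement of that of $a_r$, which is exactly what the paper's $Q(k_0)$ records.
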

\begin{proof}
The operations $\ast^c$ and $\ast$ are linear. It suffices to prove  (\ref{27}) for  basis elements.
Let $s^k_{(r)}\in K^r(4)$ be an arbitrary $r$-dimensional basis element of $K(4)$ which corresponds to  $s_k^{(r)}\in C(4)$.
By the definition of $\ast^c$ and by (\ref{14}) we have
\begin{equation*}
1=\langle s_k^{(4-r)}, \ s^k_{(4-r)}\rangle=\varepsilon(r)\langle \ast^c \tilde s_k^{(r)}, \ s^k_{(4-r)}\rangle.
 \end{equation*}
 On the other hand:
 \begin{equation*}
1=\langle \tilde s_k^{(r)}, \ \tilde s^k_{(r)}\rangle=Q(k_0)\varepsilon'(r)\langle \tilde s_k^{(r)}, \ \ast s^k_{(4-r)}\rangle
 \end{equation*}
 which yields
 \begin{equation*}
Q(k_0)\varepsilon'(r)\langle \tilde s_k^{(r)}, \ \ast s^k_{(4-r)}\rangle=\varepsilon(r)\langle \ast^c \tilde s_k^{(r)}, \ s^k_{(4-r)}\rangle.
 \end{equation*}
 Multiplying by $Q(k_0)\varepsilon'(r)$ both sides of the above we find
 \begin{equation*}
\langle \tilde s_k^{(r)}, \ \ast s^k_{(4-r)}\rangle=Q(k_0)\varepsilon'(r)\varepsilon(r)\langle \ast^c \tilde s_k^{(r)}, \ s^k_{(4-r)}\rangle,
 \end{equation*}
 since $Q^2(k_0)=(\varepsilon'(r))^2=1$.
 Thus, using (\ref{9}) we obtain (\ref{27}).
 \end{proof}

Let $V\subset C(4)$ be a finite set of $4$-dimensional basis elements of the complex $C(4)$.
Define $V$ as follows
\begin{equation}\label{28}
V=\sum_ke_k, \qquad k=(k_0,k_1,k_2,k_3),\qquad k_i=1,2, ...,N_i,
\end{equation}
where $N_i\in \mathbb{N}$ is a fixed number and $i=0,1,2,3$. We agree that
in what follows the subscripts $k_i$ always run the set
of values indicated in (\ref{28}).

We set
\begin{equation*}\label{}
 V_r=\sum_k\sum_{(r)}s_k^{(r)}\otimes\ast^c s_k^{(r)}.
\end{equation*}
 For example,
\begin{equation*}
 V_1=\sum_k\sum_{i=0}^3 e_k^i\otimes\ast^c e_k^i=\sum_k(e_k^0\otimes\tilde e_k^{123}-
 e_k^1\otimes\tilde e_k^{023}+e_k^2\otimes\tilde e_k^{013}-e_k^3\otimes\tilde
 e_k^{012})
 \end{equation*}
 and
 \begin{equation*}
 V_2=\sum_k\sum_{i<j}^3 e_k^{ij}\otimes\ast^c e_k^{ij}=\sum_k(e_k^{01}\otimes\tilde e_k^{23}-e_k^{02}\otimes\tilde e_k^{13}+
 e_k^{03}\otimes\tilde e_k^{12}+e_k^{12}\otimes\tilde e_k^{03}-e_k^{13}\otimes\tilde e_k^{02}+e_k^{23}\otimes\tilde e_k^{01}),
 \end{equation*}
 where $e_k^i$ and $e_k^{ij}$ are given by (\ref{5}) and (\ref{6}).

 Let
 \begin{equation*}
 \mathbb{V}=\sum_{r=0}^4 V_r.
 \end{equation*}
 For any $r$-forms $\varphi, \omega\in K^r(4)$ we define the inner
 product  $(\ , \ )_V$ by
 \begin{align}\label{29}\notag
 (\varphi ,\ \omega)_V&=\langle\mathbb{V}, \ \varphi\otimes\ast\overline{\omega}\rangle=\langle V_r, \ \varphi\otimes\ast\overline{\omega}\rangle\\
 &=\sum_k\sum_{(r)}
 \langle s_k^{(r)},\
\varphi \rangle\langle\ast^c s_k^{(r)}, \ \ast\overline{\omega}\rangle,
\end{align}
where $\overline{\omega}$ denotes the complex conjugate of the form $\omega$, i. e., $\overline{\omega}= \sum_k \sum_{(r)}\overline{\omega}_k^{(r)}s_{(r)}^k$.
For the forms of different degrees the product (\ref{29}) is set equal to zero.
For example, if $\overset{1}{\varphi}, \ \overset{1}{\omega}\in K^1(4)$ then we obtain
\begin{equation*}\label{}
(\overset{1}{\varphi}, \ \overset{1}{\omega})_V=\sum_k\big[-\varphi_k^0\overline{\omega}_k^0+\varphi_k^1\overline{\omega}_k^1+\varphi_k^2\overline{\omega}_k^2+
\varphi_k^3\overline{\omega}_k^3\big].
\end{equation*}
It should be noted that in the definition of the inner product a role of $\cup$-multiplication is now played by the tensor multiplication (cf. \cite{S3}).   In (\ref{29}) the  Lorentz  matric structure is still captured. Using (\ref{14}) and (\ref{21})--(\ref{25})
 we obtain
\begin{align*}\label{}
(\overset{0}{\omega}, \ \overset{0}{\omega})_V&=\sum_k|\overset{0}{\omega}_k|^2, \\
(\overset{1}{\omega}, \ \overset{1}{\omega})_V&=\sum_k\big(-|\omega_k^0|^2+|\omega_k^1|^2+|\omega_k^2|^2+|\omega_k^3|^2\big), \\
(\overset{2}{\omega}, \ \overset{2}{\omega})_V&=\sum_k\big(-|\omega_k^{01}|^2-|\omega_k^{02}|^2-|\omega_k^{03}|^2+|\omega_k^{12}|^2+|\omega_k^{13}|^2+
|\omega_k^{23}|^2\big), \\
(\overset{3}{\omega}, \ \overset{3}{\omega})_V&=\sum_k\big(-|\omega_k^{012}|^2-|\omega_k^{013}|^2-|\omega_k^{023}|^2+|\omega_k^{123}|^2\big), \\
(\overset{4}{\omega}, \ \overset{4}{\omega})_V&=-\sum_k|\overset{4}{\omega}_k|^2.
\end{align*}
The inner product makes it possible to define the adjoint of
$d^c$, denoted $\delta^c$.
\begin{prop} For any $(r-1)$-form $\varphi$ and $r$-form $\omega$ we have
\begin{equation}\label{30}
(d^c\varphi, \ \omega)_V=\langle\partial\mathbb{V}, \ \varphi\otimes\ast\overline{\omega}\rangle+( \varphi, \ \delta^c\omega)_V,
\end{equation} where
 \begin{equation}\label{31}
 \delta^c=(-1)^{r}\ast^{-1}d^c\ast
\end{equation}
and $\ast\ast^{-1}=I$.
\end{prop}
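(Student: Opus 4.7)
The strategy is to compute $\langle\partial\mathbb{V},\,\varphi\otimes\ast\overline{\omega}\rangle$ via the tensor boundary rule (\ref{7}), observe that degree-matching leaves only two contributions, and identify them with $(d^c\varphi,\omega)_V$ and $-(\varphi,\delta^c\omega)_V$ respectively through the adjointness (\ref{15}) and the definition (\ref{31}) of $\delta^c$.

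To carry this out, I would first apply (\ref{7}) to each summand of $\mathbb{V}=\sum_s V_s$, writing
\begin{equation*}
\partial(s_k^{(s)}\otimes\ast^c s_k^{(s)})=\partial s_k^{(s)}\otimes\ast^c s_k^{(s)}+(-1)^{s}s_k^{(s)}\otimes\partial\ast^c s_k^{(s)}.
\end{equation*}
The cochain $\varphi\otimes\ast\overline{\omega}$ has cochain-degrees $(r-1,\,4-r)$, so in the pairing only the first term for $s=r$ and the second term for $s=r-1$ survive; every other $V_s$ gives zero by dimension mismatch. Applying (\ref{15}) factor-by-factor then transfers each $\partial$ onto the cochain side as $d^c$, so the surviving pieces equal $\langle V_r,\,d^c\varphi\otimes\ast\overline{\omega}\rangle=(d^c\varphi,\omega)_V$ and $(-1)^{r-1}\langle V_{r-1},\,\varphi\otimes d^c\ast\overline{\omega}\rangle$.

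It remains to convert the second piece into $-(\varphi,\delta^c\omega)_V$. From (\ref{31}) applied to $r$-forms, $\ast\delta^c=(-1)^r d^c\ast$, and because $\ast$ and $d^c$ have integer matrix elements they commute with complex conjugation, so $\ast\overline{\delta^c\omega}=(-1)^r d^c\ast\overline{\omega}$. Substituting yields $(-1)^{r-1}\langle V_{r-1},\,\varphi\otimes d^c\ast\overline{\omega}\rangle=(-1)^{2r-1}(\varphi,\delta^c\omega)_V=-(\varphi,\delta^c\omega)_V$, and adding the two pieces gives (\ref{30}) after rearrangement. The main difficulty is the sign bookkeeping in this last step: the $(-1)^{r-1}$ produced by the tensor Leibniz rule must combine with the $(-1)^r$ built into (\ref{31}) to yield exactly the minus sign demanded by the identity, which is in fact the reason for the particular form of the definition of $\delta^c$.
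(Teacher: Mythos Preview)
Your proof is correct and is essentially the same as the paper's: both use the Leibniz rule together with the adjointness (\ref{15}) and the definition (\ref{31}) to produce the two surviving terms and track the sign. The only cosmetic difference is that the paper starts from $(d^c\varphi,\omega)_V$ and invokes the cochain Leibniz rule $d^c(\varphi\otimes\ast\overline{\omega})=d^c\varphi\otimes\ast\overline{\omega}+(-1)^{r-1}\varphi\otimes d^c(\ast\overline{\omega})$ (which it states as induced by (\ref{7}) via (\ref{15})), whereas you start from $\langle\partial\mathbb{V},\,\cdot\,\rangle$ and apply the chain rule (\ref{7}) directly---the two calculations are dual to one another.
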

\begin{proof} The proof is a computation. From the definition
(\ref{15}) it follows that  (\ref{7})  induces the similar
relation for the coboundary operator $d^c$ on forms:
\begin{equation*}
d^c(\varphi\otimes\ast\omega)=d^c\varphi\otimes\ast\omega+(-1)^{r-1}\varphi\otimes
d^c(\ast\omega).
\end{equation*}
Using this we compute
\begin{align*}
(d^c\varphi, \ \omega)_V&=\langle\mathbb{V}, \
d^c\varphi\otimes\ast\overline{\omega}\rangle=\langle V_r, \
d^c\varphi\otimes\ast\overline{\omega}\rangle\\&= \langle\mathbb{V}, \
d^c(\varphi\otimes\ast\overline{\omega})\rangle-(-1)^{r-1}\langle\mathbb{V}, \
\varphi\otimes d^c(\ast\overline{\omega})\rangle\\&= \langle\partial \mathbb{V}, \
\varphi\otimes\ast\overline{\omega}\rangle+(-1)^{r}\langle\mathbb{V}, \
\varphi\otimes\ast(\ast^{-1}d^c\ast\overline{\omega})\rangle\\&= \langle\partial \mathbb{V}, \
\varphi\otimes\ast\overline{\omega}\rangle+\langle V_{r-1}, \
\varphi\otimes\ast(\delta^c\overline{\omega})\rangle.
 \end{align*}
 It immediately follows (\ref{30}).
\end{proof}
Relation (\ref{30}) is a discrete analog of the Green formula.
From (\ref{26}) we infer
\begin{equation*}
\ast^{-1}=(-1)^{r(4-r)+1}\ast=(-1)^{r+1}\ast.
\end{equation*}
Putting this in (\ref{31}) we obtain
 \begin{equation}\label{32}
 \delta^c=\ast d^c\ast.
\end{equation}
This makes it clear that the operator $\delta^c: K^{r+1}(4) \rightarrow K^r(4)$ is a discrete analog of the codifferential $\delta$.
  For the 0-form (\ref{12}) we have $\delta^c\overset{0}{\omega}=0$.
 Note that the difference expression for   $\delta^c$ is slightly different than that given in \cite{S3}.
 Using (\ref{16})--(\ref{19}) and (\ref{21})--(\ref{25}) we can calculate
\begin{equation}\label{33}
\delta^c\overset{1}{\omega}=\sum_k(\Delta_0\omega_k^{0}-\Delta_1\omega_k^{1}-\Delta_2\omega_k^{2}-\Delta_3\omega_k^{3})x^k,
\end{equation}
\begin{align}\label{34} \nonumber
\delta^c\overset{2}{\omega}=\sum_k\big[(\Delta_1\omega_k^{01}+\Delta_2\omega_k^{02}+\Delta_3\omega_k^{03})e_{0}^k\\ \nonumber
+(\Delta_0\omega_k^{01}+\Delta_2\omega_k^{12}+\Delta_3\omega_k^{13})e_{1}^k\\ \nonumber
+(\Delta_0\omega_k^{02}-\Delta_1\omega_k^{12}+\Delta_3\omega_k^{23})e_{2}^k\\
+(\Delta_0\omega_k^{03}-\Delta_1\omega_k^{13}-\Delta_2\omega_k^{23})e_{3}^k\big],
\end{align}
\begin{align}\label{35} \nonumber
\delta^c\overset{3}{\omega}=\sum_k\big[(-\Delta_2\omega_k^{012}-\Delta_3\omega_k^{013})e_{01}^k+
(\Delta_1\omega_k^{012}-\Delta_3\omega_k^{023})e_{02}^k\\ \nonumber
+(\Delta_1\omega_k^{013}+\Delta_2\omega_k^{023})e_{03}^k
+(\Delta_0\omega_k^{012}-\Delta_3\omega_k^{123})e_{12}^k\\
+(\Delta_0\omega_k^{013}+\Delta_2\omega_k^{123})e_{13}^k
+(\Delta_0\omega_k^{023}-\Delta_1\omega_k^{123})e_{23}^k\big],
\end{align}
\begin{align}\label{36}
\delta^c\overset{4}{\omega}=\sum_k\big[(\Delta_3\overset{4}{\omega}_k)e_{012}^k-(\Delta_2\overset{4}{\omega}_k)e_{013}^k\\ \nonumber
+(\Delta_1\overset{4}{\omega}_k)e_{023}^k+(\Delta_0\overset{4}{\omega}_k)e_{123}^k\big].
\end{align}
It is obvious that
$\delta^c\delta^c\overset{r}{\omega}=0$ \ for any $r=1,2,3,4.$
The linear map
\begin{equation*}
\Delta^c=-(d^c\delta^c+\delta^cd^c): \ K^r(4) \rightarrow K^r(4)
\end{equation*}
is called a discrete analogue of the Laplacian (\ref{2}). It is clear that
\begin{equation}\label{37}
-(d^c\delta^c+\delta^cd^c)=(d^c-\delta^c)^2=\mathrm{i}(d^c+\delta^c)^2.
\end{equation}

Finally, let us introduce the following  operation
\begin{equation*}
\tilde\iota: K^r(4)
\rightarrow \tilde K^r(4), \qquad \tilde\iota: \tilde K^r(4)
\rightarrow K^r(4)
\end{equation*}
by setting
\begin{equation}\label{38}
 \tilde\iota s_{(r)}^k= \tilde s_{(r)}^k, \qquad \tilde\iota\tilde s_{(r)}^k=  s_{(r)}^k,
\end{equation}
where $s_{(r)}^k$ and $\tilde s_{(r)}^k$ are  basis elements of
$K^r(4)$ and $\tilde K^r(4)$ respectively.  Consequently, for an $r$-form  $\varphi\in K^r(4)$
we have \ $\tilde\iota\varphi=\tilde\varphi$. \ Recall that the components
 of $\tilde\varphi\in \tilde K^r(4)$ and  $\varphi\in
K^r(4)$ are the same.
\begin{prop} The following hold
\begin{equation}\label{39}
\tilde\iota^2=I, \qquad \tilde\iota\ast=\ast\tilde\iota, \qquad
\tilde\iota d^c=d^c\tilde\iota, \qquad \tilde\iota \delta^c=\delta^c\tilde\iota.
\end{equation}
\end{prop}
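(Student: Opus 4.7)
The plan is to verify each of the four identities on basis elements and extend by linearity, using the explicit formulas for $\ast$, $d^c$, and $\delta^c$ established in the preceding propositions. The only delicate point is to keep straight that both $\ast$ and $d^c$ are defined symmetrically on the two copies $K(4)$ and $\tilde K(4)$; once this is noted, everything collapses to bookkeeping.

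For the involution, apply (\ref{38}) twice: $\tilde\iota^2 s_{(r)}^k = \tilde\iota\tilde s_{(r)}^k = s_{(r)}^k$, and the symmetric check on $\tilde s_{(r)}^k$; linearity gives $\tilde\iota^2=I$. For $\tilde\iota\ast=\ast\tilde\iota$, I would compare both sides on an arbitrary basis element $s_{(r)}^k$. On one hand, (\ref{20}) gives
\begin{equation*}
\tilde\iota\ast s_{(r)}^k = \tilde\iota\bigl(Q(k_0)\varepsilon(r)\tilde s_{(4-r)}^k\bigr) = Q(k_0)\varepsilon(r)s_{(4-r)}^k.
\end{equation*}
On the other hand, $\ast\tilde\iota s_{(r)}^k = \ast\tilde s_{(r)}^k$, and since the same rule (\ref{20}) governs the action of $\ast$ on $\tilde K(4)$, this also equals $Q(k_0)\varepsilon(r)s_{(4-r)}^k$. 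The analogous computation starting from $\tilde s_{(r)}^k$ closes the case.

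The commutation $\tilde\iota d^c = d^c\tilde\iota$ I would read off from the explicit difference formulas (\ref{16})--(\ref{19}): $d^c$ acts only on the numerical components via the difference operators $\Delta_i$, while $\tilde\iota$ leaves those components untouched and merely switches basis elements $s_{(r)}^k \leftrightarrow \tilde s_{(r)}^k$. Because $d^c$ on $\tilde K(4)$ is defined by exactly the same coboundary rule (\ref{15}) as on $K(4)$, one gets $d^c\tilde\varphi = \widetilde{d^c\varphi}$ for any form $\varphi$, which is the desired identity.

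The last identity is then automatic from $\delta^c=\ast d^c\ast$, shown in (\ref{32}), combined with the two identities just proved:
\begin{equation*}
\tilde\iota\delta^c = \tilde\iota\ast d^c\ast = \ast\tilde\iota d^c\ast = \ast d^c\tilde\iota\ast = \ast d^c\ast\tilde\iota = \delta^c\tilde\iota.
\end{equation*}
I do not anticipate a real obstacle here; the whole proposition is essentially a consistency statement ensuring that $\tilde\iota$ is a morphism of the discrete exterior calculus between $K(4)$ and $\tilde K(4)$. The only thing to watch is that the two copies of the complex carry identically-defined operators, so that $\tilde\iota$ can be shuttled across each of $\ast$ and $d^c$ without any sign or shift correction.
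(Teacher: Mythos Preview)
Your proof is correct and follows the same approach as the paper: the paper's own proof is the single line ``The proof immediately follows from definitions of the corresponding operations,'' and you have simply unwound those definitions explicitly on basis elements. Your derivation of the fourth identity from (\ref{32}) and the first three is a nice economy, but otherwise there is nothing to add.
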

\begin{proof}
The proof immediately follows from definitions of the
corresponding operations.
\end{proof}

 \section{Discrete Dirac-K\"{a}hler equation and chiral symmetry}
Let us introduce a discrete inhomogeneous form as follows
\begin{equation}\label{40}
\Omega=\sum_{r=0}^4\overset{r}{\omega},
\end{equation}
where $\overset{r}{\omega}$ is given by (\ref{12}) and (\ref{13}).
Due to (\ref{37}) a discrete analog of the Dirac-K\"{a}hler equation (\ref{3}) can be defined as
 \begin{equation}\label{41}
\mathrm{i}(d^c+\delta^c)\Omega=m\Omega.
\end{equation}
We can write this equation more explicitly by separating its homogeneous components as
\begin{align}\label{42} \nonumber
\mathrm{i}\delta^c\overset{1}{\omega}=m\overset{0}{\omega},\\ \nonumber
\mathrm{i}(d^c\overset{0}{\omega}+\delta^c\overset{2}{\omega})=m\overset{1}{\omega},\\
\mathrm{i}(d^c\overset{1}{\omega}+\delta^c\overset{3}{\omega})=m\overset{2}{\omega},\\ \nonumber
\mathrm{i}(d^c\overset{2}{\omega}+\delta^c\overset{4}{\omega})=m\overset{3}{\omega},\\ \nonumber
\mathrm{i}d^c\overset{3}{\omega}=m\overset{4}{\omega}.
\end{align}
This set of equations  can be expressed in terms of difference equations.
Substituting (\ref{16})--(\ref{19}) and (\ref{33})--(\ref{36}) into (\ref{42}) we obtain
\begin{align*}\label{}
\mathrm{i}(\Delta_0\omega_k^{0}-\Delta_1\omega_k^{1}-\Delta_2\omega_k^{2}-\Delta_3\omega_k^{3})=m\overset{0}{\omega}_k,\\
\mathrm{i}(\Delta_0\overset{0}{\omega}_k+\Delta_1\omega_k^{01}+\Delta_2\omega_k^{02}+\Delta_3\omega_k^{03})=m\omega_k^0,\\
\mathrm{i}(\Delta_1\overset{0}{\omega}_k+\Delta_0\omega_k^{01}+\Delta_2\omega_k^{12}+\Delta_3\omega_k^{13})=m\omega_k^1,\\
\mathrm{i}(\Delta_2\overset{0}{\omega}_k+\Delta_0\omega_k^{02}-\Delta_1\omega_k^{12}+\Delta_3\omega_k^{23})=m\omega_k^2,\\
\mathrm{i}(\Delta_3\overset{0}{\omega}_k+\Delta_0\omega_k^{03}-\Delta_1\omega_k^{13}-\Delta_2\omega_k^{23})=m\omega_k^3,\\
\mathrm{i}(\Delta_0\omega_k^1-\Delta_1\omega_k^0-\Delta_2\omega_k^{012}-\Delta_3\omega_k^{013})=m\omega_k^{01},\\
\mathrm{i}(\Delta_0\omega_k^2-\Delta_2\omega_k^0+\Delta_1\omega_k^{012}-\Delta_3\omega_k^{023})=m\omega_k^{02},\\
\mathrm{i}(\Delta_0\omega_k^3-\Delta_3\omega_k^0+\Delta_1\omega_k^{013}+\Delta_2\omega_k^{023})=m\omega_k^{03},\\
\mathrm{i}(\Delta_1\omega_k^2-\Delta_2\omega_k^1+\Delta_0\omega_k^{012}-\Delta_3\omega_k^{123})=m\omega_k^{12},\\
\mathrm{i}(\Delta_1\omega_k^3-\Delta_3\omega_k^1+\Delta_0\omega_k^{013}+\Delta_2\omega_k^{123})=m\omega_k^{13},\\
\mathrm{i}(\Delta_2\omega_k^3-\Delta_3\omega_k^2+\Delta_0\omega_k^{023}-\Delta_1\omega_k^{123})=m\omega_k^{23}, \\
\mathrm{i}(\Delta_0\omega_k^{12}-\Delta_1\omega_k^{02}+\Delta_2\omega_k^{01}+\Delta_3\overset{4}{\omega}_k)=m\omega_k^{012},\\
\mathrm{i}(\Delta_0\omega_k^{13}-\Delta_1\omega_k^{03}+\Delta_3\omega_k^{01}-\Delta_2\overset{4}{\omega}_k)=m\omega_k^{013},\\
\mathrm{i}(\Delta_0\omega_k^{23}-\Delta_2\omega_k^{03}+\Delta_3\omega_k^{02}+\Delta_1\overset{4}{\omega}_k)=m\omega_k^{023},\\
\mathrm{i}(\Delta_1\omega_k^{23}-\Delta_2\omega_k^{13}+\Delta_3\omega_k^{12}+\Delta_0\overset{4}{\omega}_k)=m\omega_k^{123},\\
\mathrm{i}(\Delta_0\omega_k^{123}-\Delta_1\omega_k^{023}+\Delta_2\omega_k^{013}-\Delta_3\omega_k^{012})=m\overset{4}{\omega}_k.
\end{align*}

Let us introduce the modified star operator $\star$ on  inhomogeneous forms (\ref{40}) by the rule
\begin{equation}\label{43}
\star\Omega=\mathrm{i}\ast B\Omega,
\end{equation}
where $B$ is the  main antiautomorphism (see \cite{Reuter} for more details) which acts on $\Omega$ according to
\begin{equation}\label{44}
B\Omega=\sum_{r=0}^4(-1)^{\frac{r(r-1)}{2}}\overset{r}{\omega}.
\end{equation}
\begin{prop}
The operator $\star$ is an involution, i.e.,
 \begin{equation}\label{45}
\star\star\Omega=\Omega.
\end{equation}
 \end{prop}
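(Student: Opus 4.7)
The plan is to unfold the definition of $\star$ twice and reduce the identity to a sign count on each homogeneous component. Since $\mathrm{i}$ is a scalar and both $\ast$ and $B$ are linear, one gets
\begin{equation*}
\star\star\Omega=\mathrm{i}\ast B\bigl(\mathrm{i}\ast B\Omega\bigr)=-\ast B\ast B\Omega.
\end{equation*}
Because $\Omega=\sum_{r=0}^4\overset{r}{\omega}$ by (\ref{40}), it suffices to prove the componentwise identity $\ast B\ast B\overset{r}{\omega}=-\overset{r}{\omega}$ for each $r\in\{0,1,2,3,4\}$; the claim (\ref{45}) then follows by summing over $r$.

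Next, I would track the signs produced by the two applications of $B$ and $\ast$ on a homogeneous $r$-form. By (\ref{44}), the inner $B$ gives $B\overset{r}{\omega}=(-1)^{r(r-1)/2}\overset{r}{\omega}$. Then $\ast$ turns this into a $(4-r)$-form, so the outer $B$ contributes an additional factor $(-1)^{(4-r)(3-r)/2}$. Finally, the composition $\ast\ast$ acts on an $r$-form by $(-1)^{r(4-r)+1}$ in view of (\ref{26}). Collecting these three factors,
\begin{equation*}
\ast B\ast B\overset{r}{\omega}=(-1)^{N(r)}\overset{r}{\omega},\qquad N(r)=\tfrac{r(r-1)}{2}+\tfrac{(4-r)(3-r)}{2}+r(4-r)+1.
\end{equation*}

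It remains to verify that $N(r)$ is odd for every admissible $r$. A direct expansion gives $r(r-1)+(4-r)(3-r)=2r^2-8r+12$, so $\tfrac{r(r-1)}{2}+\tfrac{(4-r)(3-r)}{2}=r^2-4r+6$; adding $r(4-r)+1=4r-r^2+1$ one obtains $N(r)=7$ identically, hence $(-1)^{N(r)}=-1$ for every $r$. Combining with the outer factor $-1$ coming from $\mathrm{i}^2$ yields $\star\star\Omega=\Omega$, as claimed. No step is genuinely difficult; the only conceptual point I would be careful about is that $\star$ interchanges $K(4)$ and $\tilde K(4)$, but this is harmless because $\ast$ and $B$ are defined symmetrically on both copies (cf.\ the use of $\ast\ast$ in (\ref{26}), where the identification is already implicit). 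The main obstacle, therefore, is purely bookkeeping: one must make sure that the three exponents are combined with the correct $r$-dependence so that the total $N(r)$ is independent of $r$.
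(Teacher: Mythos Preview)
Your proof is correct and follows essentially the same route as the paper: both reduce $\star\star$ to $-\ast B\ast B$ and then verify on each homogeneous component that the two $B$-signs $(-1)^{r(r-1)/2}$ and $(-1)^{(4-r)(3-r)/2}$ combine with the $\ast\ast$-sign from (\ref{26}) to give $-1$. Your explicit computation that $N(r)=7$ for all $r$ is a slightly more detailed version of the paper's one-line sign identity, but the argument is the same.
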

 \begin{proof}
 Acting twice with $\ast B$ on an $r$-form and using (\ref{26}) we obtain
 \begin{equation*}
  \ast B\ast B\overset{r}{\omega}=(-1)^{\frac{r(r-1)}{2}}\ast B(\ast\overset{r}{\omega})=(-1)^{\frac{r(r-1)}{2}}(-1)^{\frac{(4-r)(4-r-1)}{2}}\ast \ast\overset{r}{\omega}=-\overset{r}{\omega}.
 \end{equation*}
 \end{proof}

\begin{prop}
 Let   $\Omega$ be an inhomogeneous form. Then we have
 \begin{equation}\label{46}
\star(d^c+\delta^c)\Omega+(d^c+\delta^c)\star\Omega=0.
\end{equation}
 \end{prop}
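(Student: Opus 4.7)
The plan is to reduce the identity (\ref{46}) by linearity to an arbitrary homogeneous component $\overset{r}{\omega}$ and show that $\star$ anticommutes with $d^c + \delta^c$ degree by degree. The structural fact that makes this work is that $\star$, being built from $\ast B$, attaches different binomial signs to components of different degrees, while $d^c$ and $\delta^c$ shift degree by $\pm 1$.

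From (\ref{43}) and (\ref{44}) we have $\star\overset{r}{\omega} = \mathrm{i}(-1)^{r(r-1)/2}\ast\overset{r}{\omega}$, a $(4-r)$-form. Since $d^c\overset{r}{\omega}$ and $\delta^c\overset{r}{\omega}$ have degrees $r+1$ and $r-1$, the $B$-signs attached to them by $\star$ are $(-1)^{r(r+1)/2}$ and $(-1)^{(r-1)(r-2)/2}$, which a one-line parity check shows are opposite. This sign asymmetry is precisely what makes the two terms in (\ref{46}) cancel.

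The main technical step is to derive two intertwining rules between $\ast$ and the differentials. Starting from $\delta^c = \ast d^c \ast$ (equation (\ref{32})) and $\ast\ast\eta = (-1)^{s+1}\eta$ for any $s$-form $\eta$ (equation (\ref{26})), one obtains at once
\begin{equation*}
\delta^c\ast\overset{r}{\omega} = (-1)^{r+1}\ast d^c\overset{r}{\omega}, \qquad d^c\ast\overset{r}{\omega} = (-1)^r \ast\delta^c\overset{r}{\omega};
\end{equation*}
the first is a direct substitution, and the second follows by applying $\ast$ from the left to (\ref{32}) and using (\ref{26}) on the intermediate form of degree $5-r$. With these rules in hand, expanding the two summands of (\ref{46}) on $\overset{r}{\omega}$ produces $\pm\mathrm{i}(-1)^{r(r+1)/2}\bigl(\ast d^c\overset{r}{\omega} - \ast\delta^c\overset{r}{\omega}\bigr)$ with opposite overall signs, so their sum vanishes. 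One should also check the boundary cases $r=0$ (where $\delta^c\overset{0}{\omega}=0$ and $d^c\ast\overset{0}{\omega}=0$) and $r=4$ (where $d^c\overset{4}{\omega}=0$ and $\delta^c\ast\overset{4}{\omega}=0$), but these simply amount to half of the above cancellation being trivially absent.

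The main, and essentially only, obstacle is sign bookkeeping: three parallel sources of signs have to be tracked simultaneously --- the binomial exponent $r(r-1)/2$ from $B$ (whose argument shifts by $\pm 1$ under $d^c$, $\delta^c$), the $(-1)^{s+1}$ from $\ast\ast$ (whose exponent is the degree of the form being acted on), and the exchange between $K(4)$ and $\tilde K(4)$ under $\ast$. The last of these is harmless since $d^c$ and $\delta^c$ commute with $\tilde\iota$ by (\ref{39}); the other two are best handled by fixing $r$ once and recording the parity identities $r(r-1)/2 + r \equiv r(r+1)/2 \pmod 2$ and $(r-1)(r-2)/2 \not\equiv r(r+1)/2 \pmod 2$ before substituting anything.
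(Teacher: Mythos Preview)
Your proof is correct and follows essentially the same approach as the paper: both reduce to homogeneous components and use $\delta^c=\ast d^c\ast$ together with $\ast\ast=(-1)^{r+1}$ to pass $\ast$ across the differentials. The only difference is presentational --- the paper verifies the cases $r=0,1,2,3,4$ by direct computation, while you package the same manipulations into the uniform intertwining rules $\delta^c\ast\overset{r}{\omega}=(-1)^{r+1}\ast d^c\overset{r}{\omega}$ and $d^c\ast\overset{r}{\omega}=(-1)^{r}\ast\delta^c\overset{r}{\omega}$ and the parity identities for the $B$-signs.
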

 \begin{proof}
 By  (\ref{44}) for a homogeneous component of $\Omega$ we have
  \begin{equation}\label{47}
  \star\overset{r}{\omega}=(-1)^{\frac{r(r-1)}{2}}\mathrm{i}\ast\overset{r}{\omega}.
  \end{equation}
  Let $r=1$.  Then
  \begin{align*}
\star(d^c+\delta^c)\overset{1}{\omega}&=-\mathrm{i}\ast d^c\overset{1}{\omega}+\mathrm{i}\ast\delta^c\overset{1}{\omega}=-\mathrm{i}\ast d^c\ast\ast\overset{1}{\omega}+
\mathrm{i}\ast\ast d^c\ast\overset{1}{\omega}\\&=
-\delta^c(\mathrm{i}\ast\overset{1}{\omega})-d^c(\mathrm{i}\ast\overset{1}{\omega})=-\delta^c\star\overset{1}{\omega}-d^c\star\overset{1}{\omega}=
-(\delta^c+d^c)\star\overset{1}{\omega}.
\end{align*}
In the same way we obtain
\begin{align*}
\star(d^c+\delta^c)\overset{2}{\omega}&=-\mathrm{i}\ast d^c\overset{2}{\omega}+\mathrm{i}\ast\delta^c\overset{2}{\omega}=\mathrm{i}\ast d^c\ast\ast\overset{2}{\omega}+
\mathrm{i}\ast\ast d^c\ast\overset{2}{\omega}\\&=
\delta^c(\mathrm{i}\ast\overset{2}{\omega})+d^c(\mathrm{i}\ast\overset{2}{\omega})=-\delta^c\star\overset{2}{\omega}-d^c\star\overset{2}{\omega}=
-(\delta^c+d^c)\star\overset{2}{\omega}
\end{align*}
and
\begin{equation*}
\star(d^c+\delta^c)\overset{3}{\omega}=-(\delta^c+d^c)\star\overset{3}{\omega}.
\end{equation*}
Next
\begin{equation*}
\star(d^c+\delta^c)\overset{0}{\omega}=\star d^c\overset{0}{\omega}=-\mathrm{i}\ast d^c\ast\ast\overset{0}{\omega}=-\delta^c(\mathrm{i}\ast\overset{0}{\omega})=
-\delta^c\star\overset{0}{\omega}
\end{equation*}
and
\begin{equation*}
\star(d^c+\delta^c)\overset{4}{\omega}=\star \delta^c\overset{4}{\omega}=-\mathrm{i}\ast\ast d^c\ast\overset{4}{\omega}=-d^c(\mathrm{i}\ast\overset{4}{\omega})=
-d^c\star\overset{4}{\omega}.
\end{equation*}
Hence
\begin{equation*}
\star(d^c+\delta^c)\Omega=\sum_{r=0}^4\star(d^c+\delta^c)\overset{r}{\omega}=-\sum_{r=0}^4(d^c+\delta^c)\star\overset{r}{\omega}=-(d^c+\delta^c)\star\Omega.
\end{equation*}
 \end{proof}
 By virtue of (\ref{45}) and (\ref{46}) we can claim that
 the operator $\star$ plays the same role as the fifth gamma matrix $\gamma^5$ in continual Dirac theory. Therefore, we can consider the chirality of our discrete model
 with respect to this operator.

We say that an inhomogeneous form $\Omega$ is self-dual or anti-self-dual if
\begin{equation}\label{48}
\tilde{\iota}\star\Omega=\Omega \qquad  \mbox{or} \qquad \tilde{\iota}\star\Omega=-\Omega,
\end{equation}
where $\tilde{\iota}$ is defined by (\ref{38}).
The first equation of (\ref{48}) is equivalent to the following equations
\begin{equation*}\label{}
\tilde{\iota}\star\overset{r}{\omega}=\overset{4-r}{\omega},
\end{equation*}
where  $r=0,1,2,3,4$.
Hence, from  (\ref{47}) for a self-dual form we obtain
\begin{equation}\label{49}
\tilde{\iota}\ast\overset{0}{\omega}=-\mathrm{i}\overset{4}{\omega}, \quad  \tilde{\iota}\ast\overset{1}{\omega}=-\mathrm{i}\overset{3}{\omega}, \quad
\tilde{\iota}\ast\overset{2}{\omega}=\mathrm{i}\overset{2}{\omega}, \quad \tilde{\iota}\ast\overset{3}{\omega}=\mathrm{i}\overset{1}{\omega}, \quad
 \tilde{\iota}\ast\overset{4}{\omega}=-\mathrm{i}\overset{0}{\omega}.
\end{equation}
The same equations can be drawn for an anti-self-dual form.

\begin{prop}
 If  the self-dual or anti-self-dual form $\Omega$ is a solution of the discrete Dirac-K\"{a}hler equation, then $\Omega$ is trivial, i. e., $\Omega=0$.
 \end{prop}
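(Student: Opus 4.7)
The plan is to apply the operator $\tilde\iota\star$ to both sides of the discrete Dirac-K\"ahler equation (\ref{41}) and use the intertwining relations from Propositions 2.5 and 2.6 to produce a second equation which, when combined with the original, forces $\Omega=0$ (assuming $m\neq 0$, which is the case of interest here, the massless case being covered by chiral invariance).

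First, I would start from $\mathrm{i}(d^c+\delta^c)\Omega = m\Omega$ and apply $\tilde\iota\star$ on the left of both sides. The right-hand side simply becomes $m\tilde\iota\star\Omega$. For the left-hand side, since $\mathrm{i}$ is a scalar it can be pulled out, and Proposition 2.6 gives $\star(d^c+\delta^c) = -(d^c+\delta^c)\star$, while Proposition 2.5 tells us $\tilde\iota$ commutes with both $d^c$ and $\delta^c$. Combining these we obtain
\begin{equation*}
-\mathrm{i}(d^c+\delta^c)\tilde\iota\star\Omega = m\tilde\iota\star\Omega.
\end{equation*}

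Next I would substitute the (anti-)self-duality condition (\ref{48}). In the self-dual case $\tilde\iota\star\Omega = \Omega$, which converts the displayed equation into $-\mathrm{i}(d^c+\delta^c)\Omega = m\Omega$. Adding this to the original equation $\mathrm{i}(d^c+\delta^c)\Omega = m\Omega$ yields $2m\Omega = 0$, and since $m\neq 0$ we conclude $\Omega=0$. The anti-self-dual case $\tilde\iota\star\Omega = -\Omega$ is identical up to an overall sign: the displayed equation becomes $\mathrm{i}(d^c+\delta^c)\Omega = -m\Omega$, and subtracting again gives $2m\Omega=0$.

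The conceptual content is that $\mathrm{i}(d^c+\delta^c)$ anticommutes with the chirality operator $\tilde\iota\star$, so it maps self-dual forms to anti-self-dual forms and vice versa; a form that is simultaneously an eigenvector of $\tilde\iota\star$ and of $\mathrm{i}(d^c+\delta^c)$ with nonzero eigenvalue $m$ can only be zero. There is no real obstacle beyond bookkeeping: all ingredients (involutivity of $\star$, the anticommutation relation, and the commutation of $\tilde\iota$ with $d^c,\delta^c$) are already in hand from Propositions 2.4, 2.5, and 2.6.
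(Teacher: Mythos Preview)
Your argument is correct and, in fact, more streamlined than the paper's own proof. The only blemish is in the labeling: what you call Propositions~2.4 and~2.6 are the paper's Propositions~3.1 and~3.2, which live in Section~3, not Section~2. Mathematically, nothing is missing.

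The paper proceeds degree by degree. It uses the explicit self-duality relations (\ref{49}) to show first that $\overset{0}{\omega}=0$ (by combining the first and fifth equations of (\ref{42})), then that $\overset{1}{\omega}=0$ (from the second and fourth), deducing $\overset{4}{\omega}=\overset{3}{\omega}=0$ along the way, and finally reads off $\overset{2}{\omega}=0$ from the middle equation. Your route bypasses all of this by invoking the anticommutation relation $\star(d^c+\delta^c)=-(d^c+\delta^c)\star$ of Proposition~3.2 globally on $\Omega$, together with the commutation of $\tilde\iota$ with $d^c$ and $\delta^c$ from Proposition~2.5. This is the natural ``linear algebra'' argument: an operator that anticommutes with an involution cannot have a nonzero eigenvector lying entirely in one eigenspace of that involution. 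The paper's proof has the minor advantage of being self-contained at the level of components (it does not explicitly cite Proposition~3.2), but yours is shorter, makes the mechanism transparent, and generalizes immediately to any setting where an analogous anticommutation holds.
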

\begin{proof}
Let $\Omega$ be a self-dual form. By (\ref{49}) from the first and fifth equations   of (\ref{42})  we have
\begin{align*}
m\overset{0}{\omega}=\mathrm{i}\delta^c\overset{1}{\omega}=\mathrm{i}\ast d^c\ast\overset{1}{\omega}=\mathrm{i}\ast d^c(-\tilde{\iota}\mathrm{i}\overset{3}{\omega})=\tilde{\iota}\ast d^c\overset{3}{\omega}\\
=\tilde{\iota}\ast(-\mathrm{i}m\overset{4}{\omega})=-\mathrm{i}m(\tilde{\iota}\ast\overset{4}{\omega})=-m\overset{0}{\omega}.
\end{align*}
Here we used (\ref{39}).
Since $m>0$ we obtain $\overset{0}{\omega}=0$. It follows immediately that $\overset{4}{\omega}=0$.
Using this,  from the second  and fourth equations   of  (\ref{42}) we obtain again
\begin{align*}
m\overset{1}{\omega}=\mathrm{i}(d^c\overset{0}{\omega}+\delta^c\overset{2}{\omega})=\mathrm{i}\ast d^c\ast\overset{2}{\omega}=
\mathrm{i}\ast d^c(\tilde{\iota}\mathrm{i}\overset{2}{\omega})=-\tilde{\iota}\ast d^c\overset{2}{\omega}\\
=-\tilde{\iota}\ast(-\mathrm{i}m\overset{3}{\omega}-\delta^c\overset{4}{\omega})=\mathrm{i}m(\tilde{\iota}\ast\overset{3}{\omega})= \mathrm{i}^2m\overset{1}{\omega}=-m\overset{1}{\omega}.
\end{align*}
Thus, $\overset{1}{\omega}=0$ and this yields  $\overset{3}{\omega}=0$. Furthermore, according to the third equation of  (\ref{42})  we have  $\overset{2}{\omega}=0$.
  \end{proof}

  An inhomogeneous form $\Omega$ decomposes into its self-dual and anti-self-dual parts with respect to the action of $\star$ as follows
  \begin{equation*}\label{}
\Omega=\Omega^{+}+\Omega^{-},
\end{equation*}
where
\begin{equation*}\label{}
\Omega^\pm=\frac{1}{2}(\Omega\pm\tilde{\iota}\star\Omega).
\end{equation*}
It is clear that $\Omega^+$ is self-dual and $\Omega^-$ is anti-self-dual.
 The self-dual and anti-self-dual components of $\Omega$ correspond to the chiral right and chiral left parts of a Dirac fermion.
 Denote by $\overset{r}{\omega}^{\pm}$  a homogeneous component of $\Omega^{\pm}$.
Using (\ref{47}) we can write these components more explicitly  as
\begin{align}\notag\label{50}
\overset{0}{\omega}^{\pm}=\frac{1}{2}(\overset{0}{\omega}\pm \mathrm{i}\tilde{\iota}\ast\overset{4}{\omega}), \qquad \overset{1}{\omega}^{\pm}=\frac{1}{2}(\overset{1}{\omega}\mp \mathrm{i}\tilde{\iota}\ast\overset{3}{\omega}), \qquad \overset{2}{\omega}^{\pm}=\frac{1}{2}(\overset{2}{\omega}\mp \mathrm{i}\tilde{\iota}\ast\overset{2}{\omega}),\\
\overset{3}{\omega}^{\pm}=\frac{1}{2}(\overset{3}{\omega}\pm \mathrm{i}\tilde{\iota}\ast\overset{1}{\omega}), \qquad
\overset{4}{\omega}^{\pm}=\frac{1}{2}(\overset{4}{\omega}\pm \mathrm{i}\tilde{\iota}\ast\overset{0}{\omega}).
\end{align}
\begin{prop}
 If   $\Omega$ is a solution of the massless discrete Dirac-K\"{a}hler equation
 \begin{equation}\label{51}
(d^c+\delta^c)\Omega=0,
\end{equation}
  then so do both  $\Omega^+$ and $\Omega^-$.
 \end{prop}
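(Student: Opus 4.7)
The plan is to exploit the two structural facts already established: the operator $D:=d^c+\delta^c$ anti-commutes with $\star$ by Proposition (\ref{46}), while $\tilde\iota$ commutes with both $d^c$ and $\delta^c$ by (\ref{39}). Since $\Omega^{\pm}=\tfrac{1}{2}(\Omega\pm\tilde\iota\star\Omega)$, it suffices to show that the map $\tilde\iota\star$ sends solutions of $D\Omega=0$ to solutions.

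First I would verify that $\tilde\iota$ commutes with $\star$. Recall $\star=\mathrm{i}\ast B$, where $B$ only rescales each homogeneous component by a sign depending on its degree; hence $B$ commutes with $\tilde\iota$ trivially, and $\tilde\iota\ast=\ast\tilde\iota$ by (\ref{39}). So $\tilde\iota\star=\star\tilde\iota$. Combining this with the anti-commutation $\star D=-D\star$ from (\ref{46}) gives
\begin{equation*}
D(\tilde\iota\star\Omega)=\tilde\iota\, D\star\Omega=-\tilde\iota\star D\Omega,
\end{equation*}
so $\tilde\iota\star$ anti-commutes with $D$ as well. Consequently, if $D\Omega=0$ then $D(\tilde\iota\star\Omega)=0$, and therefore
\begin{equation*}
D\Omega^{\pm}=\tfrac{1}{2}\bigl(D\Omega\pm D(\tilde\iota\star\Omega)\bigr)=0,
\end{equation*}
which is the conclusion. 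For completeness one should note that $\Omega^{\pm}$ are indeed (anti-)self-dual with respect to $\tilde\iota\star$: using $\tilde\iota\star=\star\tilde\iota$, $\tilde\iota^2=I$, and the involution property (\ref{45}), one gets $(\tilde\iota\star)^2=I$, so $\tilde\iota\star\Omega^{\pm}=\pm\Omega^{\pm}$.

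There is no real obstacle here: the argument is a two-line consequence of (\ref{46}) and (\ref{39}) once one observes $\tilde\iota\star=\star\tilde\iota$. The only thing to be careful about is the sign bookkeeping in $\star=\mathrm{i}\ast B$ when checking that $\tilde\iota$ commutes with $\star$; this reduces to the fact that $B$ is defined component-wise in $r$ and thus manifestly commutes with the ``tilde'' identification $\tilde\iota$.
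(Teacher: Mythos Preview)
Your argument is correct, and it is genuinely different from the paper's proof. The paper works component-by-component: it takes the decomposition of the massless system into the five homogeneous equations, substitutes the explicit formulas (\ref{50}) for $\overset{r}{\omega}^{\pm}$ into one representative equation (the second one, $d^c\overset{0}{\omega}+\delta^c\overset{2}{\omega}=0$), and then reduces the resulting expression to another of the original equations using the identities (\ref{26}), (\ref{32}), (\ref{39}); the other cases are declared similar. Your proof instead leverages Proposition~3.2 directly as the operator identity $\star D=-D\star$, combines it with the commutations in (\ref{39}) and the trivial commutation of $B$ with $\tilde\iota$ to get $D(\tilde\iota\star)=-(\tilde\iota\star)D$, and concludes in one line. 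What your route buys is brevity and transparency: it makes explicit that the chiral invariance of the massless equation is nothing but the anti-commutation (\ref{46}) transported through $\tilde\iota$. What the paper's route buys is a concrete check at the level of the individual difference equations, which some readers may prefer since it does not rely on having already internalized Proposition~3.2 as an operator statement.
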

\begin{proof}
It suffices to prove the claim for one of Eqs. (\ref{42}), say for the second one.  Let
\begin{equation*}\label{}
\mathrm{i}(d^c\overset{0}{\omega}+\delta^c\overset{2}{\omega})=0.
\end{equation*}
Then, by (\ref{50}),   for the corresponding homogeneous components of  $\Omega^+$ we obtain
\begin{align*}
\mathrm{i}(d^c\overset{0}{\omega}^{+}+\delta^c\overset{2}{\omega}^{+})&=\frac{1}{2}\mathrm{i}(d^c(\overset{0}{\omega}+\mathrm{i}\tilde{\iota}\ast\overset{4}{\omega})+
\delta^c(\overset{2}{\omega}-\mathrm{i}\tilde{\iota}\ast\overset{2}{\omega}))\\
&=\frac{1}{2}\mathrm{i}(d^c\overset{0}{\omega}+\delta^c\overset{2}{\omega})+\frac{1}{2}\tilde{\iota}(-d^c(\ast\overset{4}{\omega})+\delta^c(\ast\overset{2}{\omega}))\\
&=\frac{1}{2}\tilde{\iota}(-\ast\ast d^c\ast\overset{4}{\omega}+\ast d^c\ast\ast\overset{2}{\omega})=
-\frac{1}{2}\tilde{\iota}\ast(\delta^c\overset{4}{\omega}+d^c\overset{2}{\omega})=0.
\end{align*}
Here we used  (\ref{26}), (\ref{32}) and (\ref{39}).  The other cases are similar.
\end{proof}
Proposition 3.4 gives rise to the chiral symmetry of our discrete model. This means that equation  (\ref{51}) is invariant under the transformation
\begin{equation*}\label{}
\Omega\longrightarrow\Omega\pm\tilde{\iota}\star\Omega.
\end{equation*}
This transformation is equivalent to
\begin{equation}\label{52}
\overset{r}{\omega}\longrightarrow\overset{r}{\omega}\pm\tilde{\iota}\star\overset{4-r}{\omega}
\end{equation}
and the equation
\begin{equation*}\label{}
d^c\overset{r-1}{\omega}+\delta^c\overset{r+1}{\omega}=0
\end{equation*}
is invariant under the transformation (\ref{52}) for any $r=0,1,2,3,4$.
\begin{prop}
 If   $\Omega$ is a solution of the  discrete Dirac-K\"{a}hler equation (\ref{41}) then we have
 \begin{equation}\label{53}
\mathrm{i}(d^c+\delta^c)\Omega^+=m\Omega^-,  \qquad \mathrm{i}(d^c+\delta^c)\Omega^-=m\Omega^+.
\end{equation}
 \end{prop}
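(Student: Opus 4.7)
The plan is to reduce the two identities to a single mechanical computation that unfolds $\Omega^\pm = \tfrac{1}{2}(\Omega \pm \tilde\iota\star\Omega)$, applies the Dirac--K\"ahler operator $\mathrm{i}(d^c+\delta^c)$, and then uses two commutation facts already established earlier in the paper: first, that $\tilde\iota$ commutes with both $d^c$ and $\delta^c$ (the last two relations in (\ref{39})), and second, that $\star$ anticommutes with $d^c+\delta^c$ (equation (\ref{46})). Combining these, $\tilde\iota\star$ also anticommutes with $d^c+\delta^c$, which is the only nontrivial ingredient needed.

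Explicitly, I would compute
\begin{equation*}
\mathrm{i}(d^c+\delta^c)\Omega^\pm \;=\; \tfrac{1}{2}\,\mathrm{i}(d^c+\delta^c)\Omega \;\pm\; \tfrac{1}{2}\,\mathrm{i}(d^c+\delta^c)\tilde\iota\star\Omega .
\end{equation*}
For the first summand I would invoke the hypothesis $\mathrm{i}(d^c+\delta^c)\Omega = m\Omega$. For the second summand I would first pull $\tilde\iota$ to the left using (\ref{39}), then use (\ref{46}) to move $\star$ past $d^c+\delta^c$ at the cost of a sign, and finally apply the hypothesis again:
\begin{equation*}
\mathrm{i}(d^c+\delta^c)\tilde\iota\star\Omega \;=\; \tilde\iota\,\mathrm{i}(d^c+\delta^c)\star\Omega \;=\; -\tilde\iota\star\bigl(\mathrm{i}(d^c+\delta^c)\Omega\bigr) \;=\; -m\,\tilde\iota\star\Omega .
\end{equation*}

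Substituting back gives
\begin{equation*}
\mathrm{i}(d^c+\delta^c)\Omega^\pm \;=\; \tfrac{1}{2}m\Omega \;\mp\; \tfrac{1}{2}m\,\tilde\iota\star\Omega \;=\; m\,\tfrac{1}{2}(\Omega \mp \tilde\iota\star\Omega) \;=\; m\,\Omega^\mp,
\end{equation*}
which is exactly (\ref{53}). There is no real obstacle here once the anticommutation $\{\star,d^c+\delta^c\}=0$ and the commutation $[\tilde\iota,d^c+\delta^c]=0$ are in hand; the only point one must be careful about is the sign bookkeeping when moving $\mathrm{i}$, $\tilde\iota$, and $\star$ past one another. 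The symmetric appearance of $\Omega^-$ on the right-hand side of the equation for $\Omega^+$ (and vice versa) is precisely the mass-term manifestation of the fact that $\tilde\iota\star$ plays the role of $\gamma^5$ and therefore flips chirality when it anticommutes with the Dirac--K\"ahler operator.
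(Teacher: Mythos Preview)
Your proof is correct and is in fact cleaner than the paper's own argument. The paper proceeds component by component: it substitutes the explicit expressions (\ref{50}) for $\overset{r}{\omega}^{\pm}$ into each of the five equations (\ref{42}) separately, and then manipulates each one using the identities (\ref{26}), (\ref{32}) and (\ref{47}) to show that $\mathrm{i}(d^c\overset{r-1}{\omega}^{+}+\delta^c\overset{r+1}{\omega}^{+})=m\overset{r}{\omega}^{-}$, working out the $r=0$ and $r=1$ cases in full and then remarking that the remaining cases are similar. Your argument bypasses this entirely by invoking the anticommutation relation (\ref{46}) together with the commutation relations for $\tilde\iota$ in (\ref{39}) at the level of the full inhomogeneous form; these are exactly the structural facts that the paper's componentwise calculation is implicitly re-deriving degree by degree. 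What you gain is brevity and transparency---the chirality flip is seen at once as a consequence of $\{\tilde\iota\star,\,d^c+\delta^c\}=0$---whereas the paper's approach has the virtue of being self-contained and not relying on Proposition~3.2, making each homogeneous piece explicitly visible.
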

\begin{proof}
Substituting (\ref{50}) into the first and second equations of (\ref{42}) we obtain
\begin{align*}
\mathrm{i}\delta^c\overset{1}{\omega}^{+}&=\frac{1}{2}\mathrm{i}(\delta^c\overset{1}{\omega}- \mathrm{i}\tilde{\iota}\delta^c\ast\overset{3}{\omega})=
\frac{1}{2}(m\overset{0}{\omega}+\tilde{\iota}\ast d^c\ast\ast\overset{3}{\omega})=
\frac{1}{2}(m\overset{0}{\omega}+\tilde{\iota}\ast d^c\overset{3}{\omega})\\&=
\frac{1}{2}(m\overset{0}{\omega}+\tilde{\iota}\ast(-\mathrm{i}m\overset{4}{\omega}))=\frac{1}{2}m(\overset{0}{\omega}-\mathrm{i}\tilde{\iota}\ast\overset{4}{\omega})=
\frac{1}{2}m(\overset{0}{\omega}-\tilde{\iota}\star\overset{4}{\omega})=m\overset{0}{\omega}^{-}
\end{align*}
and
\begin{align*}
\mathrm{i}(d^c\overset{0}{\omega}^{+} +\delta^c\overset{2}{\omega}^{+})&=\frac{1}{2}\mathrm{i}d^c(\overset{0}{\omega}+\mathrm{i}\tilde{\iota}\ast\overset{4}{\omega})+\frac{1}{2}\mathrm{i}\delta^c(\overset{2}{\omega}- i\tilde{\iota}\ast\overset{2}{\omega})\\&= \frac{1}{2}(m\overset{1}{\omega}-
\tilde{\iota}d^c\ast\overset{4}{\omega}+\tilde{\iota}\delta^c\ast\overset{2}{\omega})=\frac{1}{2}(m\overset{1}{\omega}-
\tilde{\iota}\ast\ast d^c\ast\overset{4}{\omega}+\tilde{\iota}\ast d^c\ast\ast\overset{2}{\omega})\\&=
\frac{1}{2}(m\overset{1}{\omega}-
\tilde{\iota}\ast\delta^c\overset{4}{\omega}-\tilde{\iota}\ast d^c\overset{2}{\omega})=
\frac{1}{2}(m\overset{1}{\omega}-
\tilde{\iota}\ast(\delta^c\overset{4}{\omega}+d^c\overset{2}{\omega}))\\&=
\frac{1}{2}(m\overset{1}{\omega}-
\tilde{\iota}\ast(-\mathrm{i}m\overset{3}{\omega}))=
\frac{1}{2}m(\overset{1}{\omega}+\mathrm{i}\tilde{\iota}\ast\overset{3}{\omega})=
\frac{1}{2}m(\overset{1}{\omega}-\tilde{\iota}\star\overset{3}{\omega})=m\overset{1}{\omega}^{-}.
\end{align*}
Here we used also (\ref{26}), (\ref{32}) and (\ref{47}).
Similar calculations give
\begin{align*}
\mathrm{i}(d^c\overset{1}{\omega}^{+}+\delta^c\overset{3}{\omega}^{+})=m\overset{2}{\omega}^{-}, \quad
\mathrm{i}(d^c\overset{2}{\omega}^{+}+\delta^c\overset{4}{\omega}^{+})=m\overset{3}{\omega}^{-}, \quad
\mathrm{i}d^c\overset{3}{\omega}^{+}=m\overset{4}{\omega}^{-}.
\end{align*}
Consequently, the first equation of (\ref{53}) is true.
 The remaining case is similar.
\end{proof}
Proposition 3.5 claims that in the massive case the operator $\mathrm{i}(d^c+\delta^c)$ flips the chirality.
It is known that in the continual Dirac theory the left-hand fermions turn into right-hand fermions after acting the Dirac operator and vice versa.
Thus we have the same result in the discrete case.

\medskip


\begin{thebibliography}{30}
\bibitem[1]{Beauce1}
V. de Beauc\'{e}: Towards an algebraic approach to the discretisation
of fermions., {\em PoS} \textbf{LAT2005} (2005), 276.

\bibitem[2]{Beauce}
V. de Beauc\'{e}, S. Sen and J.C. Sexton:   Chiral dirac fermions on the lattice using geometric discretisation,
{\em Nucl. Phys. B (Proc. Suppl.)} \textbf{129–-130} (2004), 468–-470.

\bibitem[3]{Becher}
P. Becher and H. Joos: The Dirac-K\"{a}hler equation and fermions on the lattice,  {\em Z. Phys. C}  \textbf{15}   (1982),
 343--365.


\bibitem[4]{Catterall2}
S. Catterall: Dirac-K\"{a}hler fermions and exact lattice supersymmetry, {\em PoS} \textbf{LAT2005} (2005), 006.

\bibitem[5]{Catterall}
S. Catterall, D. Kaplan and  M. \"{U}nsal:   Exact lattice supersymmetry,
   {\em Phys.Rept.} \textbf{484} (2009),  71--130.

\bibitem[6]{Dezin}
A.A.~Dezin: {\em Multidimensional analysis and discrete models},
CRC Press, Boca Raton, 1995.

\bibitem[7]{Dodziuk}
 J.~Dodziuk:  Finite-difference approach to Hodge theory
of harmonic forms,  {\em Amer. J. Math.} \textbf{98} (1976), 79--104.

\bibitem[8]{Kanamori}
I. Kanamori and N. Kawamoto:  Dirac-K\"{a}hler fermion from Clifford product with noncommutative differential form on a lattice,
 {\em Int. J. Mod. Phys. A} \textbf{19}(5), (2004),  695--736.

\bibitem[9]{Kaplan} D. Kaplan: A method for simulating chiral fermions on the lattice, {\em Phys. Lett. B} \textbf{288} (1992), 342--347.
\bibitem[10]{Kaplan1} D. Kaplan: Chiral Symmetry and Lattice Fermions,  arXiv: 0912.2560v2,  (2012).


\bibitem[11]{Kahler}
E. K\"{a}hler:  Der innere differentialk\"{u}l, {\em Rendiconti di Matematica} \textbf{21}(3--4) (1962),  425-523.



\bibitem[12]{Luscher}
M. L\"{u}scher: Exact chiral symmetry on the lattice and the Ginsparg-Wilson relation,    {\em Phys.Lett. B}  \textbf{428}(10) (1998),  342-345.

\bibitem[13]{Narayanan}  R. Narayanan and H. Neuberger: A construction of lattice chiral gauge theories, {\em Nucl. Phys. B} \textbf{443} (1995), 305--385.

\bibitem[14]{Rabin}
J.M. Rabin: Homology theory of lattice fermion doubling, {\em Nucl. Phys. B} \textbf{201}
(2) (1982), 315--332.

\bibitem[15]{Reuter}
M. Reuter:  Symplectic Dirac-K\"{a}hler fields, {\em J. Math. Phys.} \textbf{40} (1999),  5593--5640.

\bibitem[16]{SSSA}
 S. Sen, S. Sen, J.C.~Sexton, and D.~Adams:
 A geometric discretisation scheme applied to the Abelian
Chern-Simons theory, {\em Phys. Rev. E} \textbf{61} (2000),  3174--3185.

\bibitem[17]{S3} V. Sushch:  A discrete model of the Dirac-K\"{a}hler equation. \textit{Rep. Math. Phys.} \textbf{73}(1) (2014),  109--125.
\bibitem[18]{S2} V. Sushch:  Self-dual and anti-self-dual solutions of discrete Yang-Mills equations on a double complex.
 \textit{Cubo} \textbf{12}(3) (2010),  99--120.
 \bibitem[19]{S1}
 V.~Sushch:  On some finite-difference analogs of invariant first-order hyperbolic systems, {\em Diff. Equ.} \textbf{35}(3) (1999),  414--420; translation from
 {\em Differ. Uravn.} \textbf{35}(3)  (1999),  411--417.

\bibitem[20]{Susskind}  L. Susskind: Lattice fermions, {\em  Phys. Rev. D} \textbf{16} (1977), 3031.

\bibitem[21]{Teixeira1}
 F.L. Teixeira: Lattice Maxwell's Equations, {\em Progress in Electromagnetics Research} \textbf{148}  (2014),  113--128.

 \bibitem[22]{Teixeira}
 F.L. Teixeira and  W.C. Chew:  Lattice electromagnetic theory from a topological viewpoint, {\em J.  Math. Phys.} \textbf{40}(1)  (1999),  168--187.

\bibitem[23]{Wilson}
S.O. Wilson:  Differential forms, fluids, and finite models, {\em Proc. of Amer. Math. Soc.} \textbf{139} (2011),  2597--2604.


\end{thebibliography}
\end{document}